\newenvironment{proofsketch}{\noindent {\bf Proof sketch.}\ }{\par\vskip 4mm\par}
\newtheorem{theorem}{Theorem}
\newtheorem{lemma}[theorem]{Lemma}
\theoremstyle{definition}
\newtheorem{definition}[theorem]{Definition}
\theoremstyle{remark}
\theoremstyle{plain}
\newtheorem{observation}[theorem]{Observation}
\newcommand{\Wlog}{W.l.o.g.\xspace}
\newcommand{\Wlogs}{w.l.o.g.\xspace} 
\newcommand{\wrt}{w.r.t.\xspace}
\def\max{\mathrm{max}}
\newcommand{\addcost}{exclusive cost\xspace}
\newcommand{\Addcost}{Exclusive cost\xspace}
\newcommand{\addcosts}{exclusive costs\xspace}
\newcommand{\vaddcost}{amortized cost\xspace}
\newcommand{\Vaddcost}{Amortized cost\xspace}
\newcommand{\vaddcosts}{amortized costs\xspace}
\newcommand{\spg}{series-parallel graph\xspace}
\newcommand{\spgs}{series-parallel graphs\xspace}
\newcommand{\Spg}{Series-parallel graph\xspace}
\newcommand{\spgalg}{SPGAA\xspace}
\newcommand{\Spgalg}{SPGAA\xspace}
\newcommand{\spgalglong}{Series-Parallel Graph Arrangement Algorithm\xspace}
\newcommand{\ttg}{two-terminal graph\xspace}
\newcommand{\ttgs}{two-terminal graphs\xspace}
\newcommand{\Ttg}{Two-terminal graph\xspace}
\newcommand{\TTG}{TTG\xspace}
\newcommand{\TTGs}{TTGs\xspace}
\newcommand{\pc}{parallel composition\xspace}
\newcommand{\Pc}{Parallel composition\xspace}
\newcommand{\Sc}{Series composition\xspace}
\newcommand{\ttspg}{two-terminal series-parallel graph\xspace}
\newcommand{\Ttspg}{Two-terminal series-parallel graph\xspace}
\newcommand{\TTSPG}{TTSPG\xspace}
\newcommand{\Sns}{Simple node sequence\xspace}
\newcommand{\sns}{simple node sequence\xspace}
\newcommand{\snss}{simple node sequences\xspace}
\newcommand{\msc}{series composition\xspace}
\newcommand{\sed}{S-decomposition\xspace}
\newcommand{\apath}{A-path\xspace}
\newcommand{\apaths}{A-paths\xspace}
\newcommand{\spath}{S-path\xspace}
\newcommand{\spaths}{S-paths\xspace}
\newcommand{\spgapproxratio}{$14 \cdot D^2$\xspace}
\newcommand{\ARR}{\pi}
\newcommand{\RCOST}{R\text{-}COST}
\newcommand{\ACOST}{E\text{-}COST}
\newcommand{\VCOST}{A\text{-}COST}
\newcommand{\ARRALG}{\ARR_{ALG}}
\newcommand{\ARROPT}{\ARR_{OPT}}
\newcommand{\RARR}[1]{\ARR(#1)}
\newcommand{\RALG}[1]{\ARRALG(#1)}
\newcommand{\ROPT}[1]{\ARROPT(#1)}
\newcommand{\SED}{SD}
\newcommand{\mst}{-}
\newcommand{\mt}{\ominus}
\newcommand{\minLAP}{\textsc{minLA}\xspace}
\newcommand{\minlap}{minimum linear arrangement problem\xspace}
\newcommand{\spt}{SP-tree\xspace}
\newcommand{\mspt}{minimal SP-tree\xspace}
\newcommand{\lnode}{$L$\text{-}node\xspace}
\newcommand{\pnode}{$P$\text{-}node\xspace}
\newcommand{\snode}{$S$\text{-}node\xspace}
\newcommand{\lnodes}{$L$\text{-}nodes\xspace}
\newcommand{\pnodes}{$P$\text{-}nodes\xspace}
\newcommand{\snodes}{$S$\text{-}nodes\xspace}
\begin{document}


\title{Minimum Linear Arrangement of Series-Parallel Graphs\thanks{This work was partially supported by the German Research Foundation (DFG) within the Collaborative Research Center ``On-The-Fly Computing'' (SFB 901).}}

\author{Martina Eikel \and Christian Scheideler\and Alexander Setzer}

\begin{titlepage}

\author{Martina Eikel \\
   University of Paderborn, Paderborn, Germany \\
   martinah@upb.de \\
   \and
   Christian Scheideler \\
   University of Paderborn, Paderborn, Germany\\
   scheideler@upb.de
   \and
   Alexander Setzer \\
   University of Paderborn, Paderborn, Germany \\
   asetzer@mail.upb.de \\
   }

\date{}
\end{titlepage}

\maketitle \thispagestyle{empty}


 \begin{abstract}
 We present a factor $14D^2$ approximation algorithm for the minimum linear arrangement problem on series-parallel graphs, where $D$ is the maximum degree in the graph.
 Given a suitable decomposition of the graph, our algorithm runs in time $O(|E|)$ and is very easy to implement.
 Its divide-and-conquer approach allows for an effective parallelization.
 Note that a suitable decomposition can also be computed in time $O(|E|\log{|E|})$ (or even $O(\log{|E|}\log^*{|E|})$ on an EREW PRAM using $O(|E|)$ processors).
 
 For the proof of the approximation ratio, we use a sophisticated charging method that uses techniques similar to amortized analysis in advanced data structures.

 On general graphs, the minimum linear arrangement problem is known to be NP-hard.
 To the best of our knowledge, the minimum linear arrangement problem on series-parallel graphs has not been studied before.

  \end{abstract}

  \newgeometry{top=2.5cm, bottom=2.5cm, left=2.70cm, right=3.15cm}

\section{Introduction}
The minimum linear arrangement problem is a well-known graph embedding problem, in which an arbitrary graph is mapped onto the line topology, such that the sum of the distances of nodes that share an edge is minimized.
We consider the class of series-parallel graphs, which arises naturally in the context of parallel programs: 
modelling the execution of a parallel program yields a series-parallel graph, where sources of parallel compositions represent fork points, and sinks of parallel compositions represent join points (for the definition of a parallel composition, see Subsection~\ref{spg:sec:definitions}).
Note that in this context, \spgs typically have a very low node degree:
Since spawning child processes is costly, one would usually not spawn too many of them at a time.

\subsection{Problem statement and definitions}\label{spg:sec:definitions}
Throughout this work, we consider undirected graphs only.
The following definition of the \minlap is based on \cite{bib:petit2003}:
\begin{definition}[Linear arrangement]
 Given a graph $G=(V,E)$, let $n = |V|$.
	A \emph{linear arrangement} $\ARR$ of $G$ is a one-to-one function
		\[ \ARR : V \rightarrow \{1, \dots, n\}. \]
	For a node $v \in V$, $\ARR(v)$ is also called the \emph{position of $v$} in $\ARR$.
\end{definition}
\begin{definition}[Cost of a linear arrangement]
 Given a graph $G=(V,E)$ and a linear arrangement $\ARR$ of $G$, we denote the \emph{cost} of $\ARR$ by
	\[ COST_{\ARR}(G) := \sum_{\{u,v\} \in E}{|\ARR(u) - \ARR(v)|}. \]
\end{definition}
\begin{definition}[Minimum linear arrangement problem]
 Given a graph $G=(V,E)$ (the \emph{input graph}), the \emph{\minlap} (\minLAP) is to find a linear arrangement $\ARR$ that minimizes $COST_{\ARR}(G)$.
\end{definition}

Next we define the class of \spgs, which we need several definitions for (the following is based on \cite{bib:eppstein1992}):
  \begin{description}
	 \item[\Ttg (\TTG)] A \emph{\ttg} $G=(V,E)$ is a graph with node set $V$, edge set $E$, and two distinct nodes $s_G, t_G \in V$ that are called source and sink, respectively.
		$s_G$ and $t_G$ are also called the \emph{terminals} of $G$.
	\item[\Sc]
	    The \emph{\msc} $SC$ of $k \geq 2$ \TTGs $X_1,\dots, X_k$ is a \TTG created from the disjoint union of $X_1,\dots,X_k$ with the following characteristics: 
	    The sink $t_{X_i}$ of $X_i$ is merged with the source $s_{X_{i+1}}$ of $X_{i+1}$ for $1 \leq i < k$.
	    The source $s_{X_1}$ of $X_1$ becomes the source $s_{SC}$ of $SC$ and the sink $t_{X_k}$ of $X_k$ becomes the sink $t_{SC}$ of $SC$.
	\item[\Pc]
	    The \emph{\pc} $PC$ of $k \geq 2$ \ttgs $X_1,\dots,X_k$ is a \TTG created from the disjoint union of $X_1,\dots,X_k$ with the following two characteristics:
	    The sources $s_{X_1},\dots,s_{X_k}$ are merged to create $s_{PC}$ and
	    the sinks $t_{X_1},\dots,t_{X_k}$ are merged to create $t_{PC}$.
	 \item[\Ttspg (\TTSPG)]\label{spg:def:ttspg}
	    A \emph{\ttspg} $G$ with source $s_G$ and sink $t_G$ is a graph that may be constructed by a sequence of series and parallel compositions starting from a set of copies of a single-edge \ttg $G'=(\{s,t\},\{\{s,t\}\})$.
	    \item[\Spg]
	    A graph $G$ is a \emph{\spg} if, for some two distinct nodes $s_G$ and $t_G$ in $G$, $G$ can be regarded as a \TTSPG with source $s_G$ and sink $t_G$.
 \end{description} 	    
Note that the series and parallel compositions are commonly defined over two input graphs only.
However, it is not hard to see that our definition of a \spg is equivalent.

An example of a \spg is shown in Figure~\ref{spg:fig:spg_example_big}.

\subsection{Related work}
	The \minLAP was first stated by Harper \cite{bib:harper1964}.
	Garey, Johnson, and Stockmeyer were the first to prove its NP-hardness on general graphs \cite{bib:garey1976}.
	Amb{\"u}hl, Mastrolilli, and Svensso showed that the \minLAP on general graphs does not have a polynomial-time approximation scheme unless NP-complete problems can be solved in randomized subexponential time \cite{bib:ambuhl2007}.	
	To the best of our knowledge, the two best polynomial-time approximation algorithms for the \minLAP on general graphs are due to Charikar, Hajiaghayi, Karloff, and Rao \cite{charikar2006}, and Feige and Lee \cite{bib:feige2007}.
	Both algorithms yield an $O(\sqrt{\log{n}}\log\log{n})$-approximation of the \minLAP.
	The latter algorithm is a combination of techniques of earlier works by Rao and Richa \cite{bib:rao1998}, and Arora, Rao, and Vazirani \cite{bib:arora2009}. 	
 	For planar graphs (which include the series-parallel graphs), Rao and Richa \cite{bib:rao1998} also present a $O(\log\log{n})$-approximation algorithm.
 	Note that even though, for high degree graphs, these algorithms achieve a better approximation factor than the one we present in this work, there are some key differences between these algorithms and ours: 
 	First of all, the algorithm we present is a very simple divide-and-conquer algorithm and its functioning can be understood easily. 
 	The aforementioned algorithms, however, are much more complex and involve solving a linear or semidefinite program.
 	Furthermore, our algorithm achieves a runtime of only $O(|E|)$ (if the \spg is given in a suitable format - otherwise, a more complex preprocessing is required that takes time $O(|E|\log{|E|})$, but this can be parallelized down to $O(
	\log{|E|}\log^*{|E|})$) making it suitable in situations where a low runtime is more important than the approximation guarantee.
 	Still, for low graph degrees (which are reasonable to assume in certain applications), our algorithm even improves the approximation factor of Rao and Richa.

 	For special classes of graphs, the NP-hardness has been shown for bipartite graphs \cite{bib:even1975}, interval graphs, and permutation graphs \cite{bib:cohen2006}.
	On the other hand, polynomial-time optimal algorithms have been found for hypercubes \cite{bib:harper1964}, trees \cite{bib:chung1988}, $d$-dimensional $c$-ary cliques \cite{bib:nakano1994}, meshes \cite{bib:fishburn2000}, and chord graphs \cite{bib:rostami2008}.
	Note that many people claim that the \minLAP is optimally solvable on outerplanar graphs, referring to \cite{bib:frederickson1988}.
	However, the problem solved in \cite{bib:frederickson1988} is different from the \minLAP as we show in \cite{bib:setzer2014}.
	Note that the question whether the \minLAP is NP-hard on \spgs is unsettled.
	  
 	Applications of the \minLAP include the design of error-correcting codes \cite{bib:harper1964}, machine job scheduling (e.g., \cite{bib:adolphson1977}), VLSI layout (e.g., \cite{bib:adolphson1973,bib:diaz2002}), and graph drawing (e.g., \cite{bib:shahrokhi2001}).
  	For an overview of heuristics for the \minLAP see the survey paper by Petit \cite{bib:petit2011}.	

 	The class of \spgs, first used by MacMahon \cite{bib:macmahon1892}, has been studied extensively.	  
	It turns out that many problems that are NP-complete on general graphs can be solved in linear time on \spgs.
	Among these are the decision version of the dominating set problem \cite{bib:kikuno1983}, the minimum vertex cover problem, the maximum outerplanar subgraph problem, and the maximum matching problem \cite{bib:takamizawa1982}.
	Furthermore, since the class of \spgs is a subclass of the class of planar graphs, any problem that is already in $P$ for that class of graphs can be solved optimally in polynomial time for \spgs as well (such as the max-cut problem \cite{bib:hadlock1975}).

      Another problem regarding \spgs is to decide, given an input graph $G$, whether it is series-parallel and, if so, to output the operations that recursively constructed the series-parallel graph.
      The first step is referred to as \emph{series-parallel graph recognition} while the second step is referred to as \emph{constructing a decomposition tree}.
      A parallel linear-time algorithm for this problem on directed graphs was first presented by Valdes, Tarjan, and Lawler \cite{bib:valdes1979}.
      Later, Eppstein \cite{bib:eppstein1992} developed a parallel algorithm for undirected graphs using a so-called \emph{nested ear decomposition}.
      The concept of an \sed used in our analysis is technically similar to that concept, though we use a different notation more suitable for our purposes.
      The algorithm we propose for approximating the \minLAP on \spgs also relies on a decomposition tree.
      For instances in which it is not given, the algorithm by Bodlaender and De Fluiter \cite{bib:bodlaender1996} can be used, since it runs on undirected graphs and outputs so-called \emph{\spt}, which can be easily transformed into a format suitable for our algorithm.

\subsection{Our contribution}
We describe a simple approximation algorithm for the \minlap on \spgs with an approximation ratio of $14D^2$, where $D$ is the degree of the graph, and a running time of $O(|E|)$ if the \spg is given in a suitable format. 
If the \spg is not given in the required format, this format can be computed in time $O(|E| \log{|E|})$ (which can even be further parallelized down to $O(\log{|E|}\log^*{|E|})$ on an EREW PRAM using $O(|E|)$ processors).
However, for certain applications it is reasonable to assume that the graph is given in the right format, e.g., when the series-parallel graph is used to model the execution of a parallel program, the desired representation can be constructed along with the model.
The simplicity and the structure of the algorithm allow for an efficient distributed implementation.

Moreover, our proof of the approximation ratio introduces a sophisticated charging method following an approach that is known from the amortized analysis of advanced data structures.
This technique may be applied in other analyses as well.

\section{Preliminaries}
The algorithm we present is defined recursively and is based on a decomposition of the \spg into components.
Therefore, prior to describing the algorithm, we introduce several definitions needed to formalize this decomposition.

The following definition is similar to the one in \cite{bib:bodlaender1996}.
\begin{definition}[\spt, \mspt]
 An \spt $T$ of a \spg $G$ is a rooted tree with the following properties:
  \begin{enumerate}
   \item Each node in $T$ corresponds to a two-terminal subgraph of $G$.
   \item Each leaf is a so-called \emph{\lnode} labelled as $L(k)$ and corresponds to a path with $k$ edges.
   \item Each inner node is a so-called \emph{\snode} or \emph{\pnode}, and the two-terminal subgraph $G'$ associated with an \snode (\pnode) is the graph obtained by a series (parallel) composition of the graphs associated with the children of $G'$, where the order of the children defines the order in which the series composition is applied (the order does not matter for a parallel composition).
   \item The root node corresponds to $G$.
  \end{enumerate}
 
 An \spt $T$ of a \spg $G$ is called \emph{minimal} if the following two conditions hold:
 \begin{enumerate}
  \item All children of an \snode are either \pnodes or \lnodes, but at least one is a \pnode.
  \item All children of a \pnode are either \snodes or \lnodes.
 \end{enumerate}  
\end{definition}
It is easy to see that for any fixed \spg $G$, there exists a \mspt for $G$.

We are now ready to introduce the following three important notions:
\begin{definition}[\Sns, Parallel component, Series component]
  Let $G$ be a \spg and $T$ be a \mspt of $G$.
   The sub-graph of $G$ associated with a leaf $L(k)$ of $T$ for $k \in \mathbb{N}$ is called a \emph{\sns}.
   The sub-graph of $G$ associated with a \pnode is called a \emph{parallel component} of $G$.
   The sub-graph of $G$ associated with a \snode is called a \emph{series component} of $G$.
   Furthermore, any \sns is called a \emph{series component}, too.
\end{definition}
An illustration of the different types of components is given by Figure~\ref{spg:fig:spg_example_big}.
\begin{figure}[tb]\centering
    \includegraphics[width=.3\linewidth]{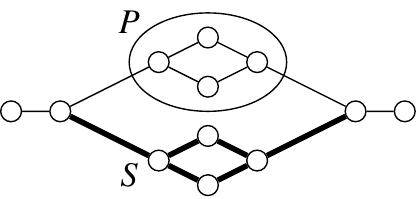}
    \caption{Example of a simple \spg.
	      $P$ is a parallel component consisting of two series components (more precisely, two \snss with two edges each).
	      The thick edges belong to the subgraph induced by the series component $S$.
	      It consists of two single-edge \snss (on the left and right end) and a parallel component.}
	\label{spg:fig:spg_example_big}
\end{figure}
The definition of a \mspt implies the following:
Each parallel component $P$ is the result of a parallel composition of two or more series components.
Furthermore, each series component $S$ is the result of a series composition of two or more parallel components or \snss, but not exclusively \snss.
This leads to the following definition:
\begin{definition}[Child component]
  Let $G$ be a \spg, let $T$ be a \mspt, and let $X$ and $Y$ be two nodes in $T$ such that $Y$ is a child of $X$.
  Further, let $C_i$ be the (series or parallel) component that is associated with $Y$ and let $C$ be the (parallel or series) component $C$ that is associated with $X$.
  Then, $C_i$ is called a \emph{child component} of $C$, and we say: $C_i \in C$.
\end{definition}
For example, the two \snss that induce the parallel component $P$ in Figure~\ref{spg:fig:spg_example_big} are child components of $P$. 
One implication of this definition is that the terminals of a parallel component and its child components overlap.

For the rest of this work, we assume that for any fixed \spg $G$, the \snss, series components and parallel components of $G$ are uniquely defined by a fixed \mspt $T$.
In Appendix~\ref{sec:appendix:scs}, we describe an efficient method to compute a \mspt according to our definition.
It is basically an extension of an algorithm by Bodlaender and de Fluiter \cite{bib:bodlaender1996}.

\section{The series-parallel graph arrangement algorithm}\label{spg:sec:algorithm}
The \spgalglong (\spgalg) is defined recursively.
In order to arrange the nodes of a series or parallel component $C$, the \Spgalg first determines the order of its child components recursively, and then places the child components side by side in an order that depends on their size.
For any (series or parallel) component $C$, when the algorithm has just arranged the nodes of $C$, it holds that its source receives the leftmost position among all nodes of $C$ and that its sink receives the position directly to the right of the source.
However, later computations (in a higher recursion level) may re-arrange the terminals and pull them apart.
More specific details are given in the corresponding subsections for the different types of components.

Illustrations of all arrangements and all different cases can be found in Appendix~\ref{sec:appendix:figures}.

\subsection{Arrangement of a \sns}
\label{spg:subsec:alg:sns}
For any \sns $L$, we label the nodes of $L$ from left to right by $1$ to $k$.
That is, the source receives label $1$ and the sink receives label $k$.
The arrangement of this sequence then is: $1, k, 2, k-1, 3, k-2, \dots$.
One can see that this arrangement fulfills the property that the source is on the leftmost position and that the sink is its right neighbor.

\subsection{Arrangement of a parallel component}
\label{spg:subsec:alg:pc}
For any parallel component $P$ with source $u$, sink $v$, and $m \geq 2$ child components $S_1, S_2, \dots, S_m$ (note that any parallel component has at least two child components), the \Spgalg recursively determines the arrangement of the child components.
We denote the computed arrangement of $S_i$ excluding the two terminal nodes (which would have been placed at the first two positions of the arrangement, see Subsection~\ref{spg:subsec:alg:sc}) by $S_i^\mst$.
\Wlog let $S_m$ be a biggest child component (\wrt the number of nodes in it).
Then, the algorithm places $u$ at the first position, $v$ at the second position, the nodes of $S_1^\mst$ to the right of that (in their order), and the nodes of $S_i^\mst$ to the right of $S_{i-1}^\mst$ for $i \in \{2,\dots,m\}$.

\subsection{Arrangement of a series component}
\label{spg:subsec:alg:sc}
For any series component $S$ with source $u$, sink $v$, and $m \geq 2$ child components $P_1, P_2, \dots, P_m$ (note that any series component has at least two child components, otherwise it would be a \sns), the \spgalg first recursively determines the arrangement of the child components.
Second, it puts $u$ and $v$ at the first two positions, in this order.
The third step differs from the case of a parallel component:
To keep the cost of the arrangement low while ensuring that a biggest child component $P_a$ receives the rightmost position, the general order of the child components is: $P_1, P_2, \dots, P_{a-1}, P_m, P_{m-1}, \dots, P_{a+2}, P_{a+1}, P_a$.
Here, the components from $P_m$ to $P_{a+1}$ are flipped (the order of their nodes is reversed).
For $m=a$, the order is $P_1, P_2, \dots, P_m$ and for $a=1$, the components are ordered in reverse (i.e., $P_m, P_{m-1}, \dots, P_1$) (where all components except for $P_1$ are flipped).

However, since each two neighboring child components $P_i$ and $P_{i+1}$ share a (terminal) node, it must be decided which of the two components may ``keep'' its node.
The strategy here is as follows: Each component $P_i$ (except for the first component, whose source has received the leftmost position already) keeps its source and lends its sink to $P_{i+1}$ (of which it is a source), except for $P_m$ (whose sink has been placed at the second position already).
This may stretch existing edges, which we will keep track of in the analysis.

An illustration of the arrangement for the case $1 < a < m$ can be found in Figure~\ref{spg:fig:alg:sc}.

\begin{figure}[htb]\centering
		\includegraphics[width=.9\linewidth]{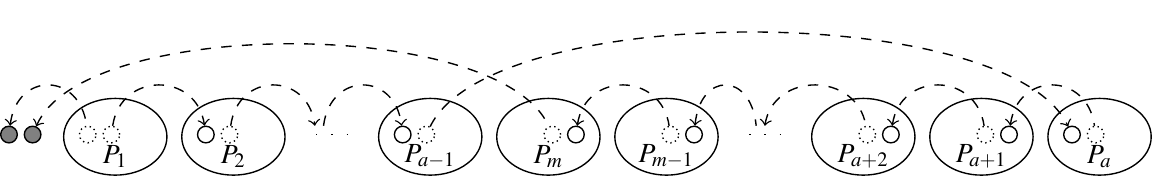}
	\caption{Order in which the \Spgalg arranges a series component consisting of $m$ child components for $1 < a < m$ (where $P_a$ is a biggest component).
	Dotted nodes indicate the position at which a node would be placed according to the previous recursion level.
	Dashed arrows indicate the change in position at the current recursion level.}
 \label{spg:fig:alg:sc}
\end{figure}

\section{Analysis}\label{sec:analysis}
In this section, we prove the approximation ratio of \spgapproxratio for the \spgalglong described in Section~\ref{spg:sec:algorithm}.
  As a first step, we provide lower bounds on the \emph{\vaddcost} in an optimal arrangement for each kind of component.
The \vaddcost of a component is the sum of two values:
First, the \addcost of this component (cost of the current component minus the individual cost of all child components).
Second, some cost that has been accounted for in a lower recursion level.
This cost is chosen such that the sum of all \vaddcosts  does not contain this cost more than three times. 
We use these bounds to establish a lower bound on the total cost of an optimal solution.
The details are described in Subsection~\ref{spg:subsec:ana:lowerbound}.
As a second step, we state upper bounds on the \addcosts generated at each recursion step of the \spgalg in order to determine an upper bound on the total cost in Subsection~\ref{spg:subsec:ana:upperbound}.
Last, we use both the lower bound as well as the upper bound to relate the cost of an optimal arrangement to that of an arrangement computed by the \spgalg.
This is done in Subsection~\ref{spg:subsec:ana:final}.
In addition to providing the approximation ratio of the \spgalg, we establish a polynomial runtime bound of our algorithm in Subsection~\ref{spg:subsec:ana:runtime}.
Note that all the proofs in this section can be found in Appendix~\ref{sec:appendix:proof}.

\subsection{Prerequisites}
 \label{spg:subsec:prereq}
 For the analysis, we need several notions, which we now introduce.

\begin{definition}[Length of an edge] 
	Given a graph $G=(V,E)$ and a linear arrangement $\ARR$ of $G$, let $u,v \in E$.
	The \emph{length of $(u,v)$ in $\ARR$}, denoted by $length_{\ARR}(u,v)$ is defined as:
  \[ length_{\ARR}(u,v) = |\ARR(u) - \ARR(v)|. \]
\end{definition}

\begin{definition}
 Given a linear arrangement $\ARR$ of a \spg $G=(V,E)$ and a (series or parallel) component $C$ in $G$, we define:
	\begin{description}
	 \item[Restricted arrangement.] The \emph{arrangement $\ARR$ restricted to $C$}, denoted by $\RARR{C}$ is obtained by removing all nodes from $\ARR$ that do not belong to $C$, as well as their incident edges, i.e., $\RARR{C}$ maps the nodes from $C$ to $\{1, \dots, |C|\}$.
	 \item[Restricted length of an edge.] For any edge $(u,v)$ that belongs to $C$, the \emph{length of $(u,v)$ restricted to $C$}, denoted by \emph{$length_{\RARR{C}}(u,v)$}, is the distance between $u$ and $v$ in $\RARR{C}$.
	 \item[Restricted cost of an arrangement.] Let $E_C$ be the set of all edges from $G$ whose both endpoints are in $C$.
		The \emph{cost of $C$ restricted to $C$}, denoted by \emph{$\RCOST_\ARR(C)$}, is defined as: \[\RCOST_\ARR(C) := \sum_{(u,v) \in E_C}{length_{\RARR{C}}(u,v)}.\]
	\end{description}
\end{definition}

\begin{definition}[\Addcost of a series / parallel component]
\label{spg:def:addcostcomponent}
 Given a linear arrangement $\ARR$ of a \spg $G$ and a (series or parallel) component $C$ in $G$ containing $m \geq 0$ child components $C_1, \dots, C_m$,
	the \emph{\addcost of $C$ in $\ARR$}, denoted by $\ACOST_\ARR(C)$, is defined as
		\[ \ACOST_\ARR(C) := \RCOST_\ARR(C) - \sum_{i=1}^{m}{\RCOST_\ARR(C_i)}. \]
\end{definition}
Note that the \addcost of a \sns $S$ is equal to the restricted cost of $S$.

We can make the following observation regarding the relationship between the \addcosts of the components and the total cost:
\begin{observation}\label{spg:obs:totalcost}
	Let $G$ be a \spg and let $\ARR$ be a linear arrangement of $G$.
	Further, let $\mathcal{C}$ be the set of all (series or parallel) components in $G$.
	It holds:
			\[ \sum_{C \in \mathcal{C}}{\ACOST_\ARR(C)} = COST_\ARR(G). \]  
\end{observation}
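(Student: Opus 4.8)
The plan is to exploit the tree structure of the fixed \mspt $T$ and carry out a telescoping (double-counting) argument over its nodes. I would first identify each component in $\mathcal{C}$ with the node of $T$ it is associated with, so that $\mathcal{C}$ is indexed by the nodes of $T$ (this is legitimate by the standing assumption that the \snss, series components, and parallel components of $G$ are uniquely defined by $T$). By the definition of a child component, the children of a node $X$ in $T$ correspond exactly to the child components $C_1,\dots,C_m$ of the component $C(X)$ associated with $X$; in particular \lnodes, which are leaves of $T$, have no child components, so for them $\ACOST_\ARR$ reduces to $\RCOST_\ARR$, consistent with the remark following Definition~\ref{spg:def:addcostcomponent}.

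Next I would expand the left-hand side using Definition~\ref{spg:def:addcostcomponent}:
\[
  \sum_{C \in \mathcal{C}} \ACOST_\ARR(C)
  = \sum_{X \in T} \Bigl( \RCOST_\ARR(C(X)) - \sum_{Y \text{ a child of } X} \RCOST_\ARR(C(Y)) \Bigr),
\]
and then reorganize this double sum by collecting, for each fixed node $X \in T$, all occurrences of the term $\RCOST_\ARR(C(X))$. This term occurs once with a positive sign (from the summand indexed by $X$ itself) and once with a negative sign for each parent that $X$ has in $T$. Since $T$ is a rooted tree, every node other than the root has exactly one parent and the root has none, so the net coefficient of $\RCOST_\ARR(C(X))$ is $0$ whenever $X$ is not the root and $1$ when $X$ is the root. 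Hence the entire sum collapses to $\RCOST_\ARR(C(\text{root}))$, and the root is associated with $G$ itself (property~4 of an \spt), so the sum equals $\RCOST_\ARR(G)$.

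Finally I would check that $\RCOST_\ARR(G) = COST_\ARR(G)$: because $G$ contains all nodes of the graph, restricting $\ARR$ to $G$ removes nothing, so $\RARR{G} = \ARR$, every restricted edge length equals $|\ARR(u)-\ARR(v)|$, and $E_G = E$; comparing the definitions of $\RCOST$ and $COST$ then gives the claim. There is no genuine obstacle in this argument; the only points requiring a little care are the bookkeeping claim that each component is a child component of exactly one other component (or of none, for $G$), which is immediate from $T$ being a tree, and confirming that the index set $\mathcal{C}$ is truly in bijection with the node set of $T$, which is exactly what the standing assumption on the fixed \mspt provides.
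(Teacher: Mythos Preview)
Your telescoping argument over the nodes of the fixed \mspt is correct and is precisely the natural justification for this fact. The paper, however, does not supply a proof at all: it records the statement as an ``Observation'' and moves on, treating it as immediate from Definition~\ref{spg:def:addcostcomponent}. So there is nothing to compare against; your write-up simply spells out what the authors leave implicit.
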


In the analysis of the \spgalg, we need to find at least one path from $s_P$ to $t_P$ through $P$ for each parallel component $P$ such that any two such paths are edge-disjoint for two different parallel components.
Therefore, we introduce the following notion of an \emph{\sed}, which yields these paths and is recursively defined as follows:
\begin{definition}[\apath, \spath, \sed
]\label{spg:def:sed}
	Let $P$ be an ``innermost'' parallel component in a \spg $G$ (i.e., one whose child components are \snss only) with source $s$, sink $t$, and $k$ child components.
	Select an arbitrary simple path from $s$ to $t$ through $P$ (i.e., select one of the \snss).
	This path is called the \emph{auxiliary path} or simply \emph{\apath} of $P$.
	The remaining paths from $s$ to $t$ through $P$ are called the \emph{selected paths} or simply \emph{\spaths} of $P$.
	
	Recursively, for an arbitrary parallel component $P$, with source $s$, sink $t$, and $m \geq 2$ child components $S_1, \dots, S_m$, for each child component $S_i$, $1 \leq i \leq m$, select a simple path $Q_i$ from $s$ to $t$ through $S_i$ in the following way:
	If $S_i$ is a \sns, $Q_i$ is the whole sequence.
	Otherwise, $S_i$ is a series component, which consists of $k \geq 0$ simple node sequences and $l \geq 1$ parallel components (note that $k+l \geq 2$).
	Denote these child components by $P_1, \dots P_{k+l}$ in the order in which they appear in $S_i$.
	Construct the path $Q_i$ step by step:
	Start with $P_1$ and add $P_1$ completely to $Q_i$ if $P_1$ is a \sns.
	If, however, $P_1$ is a parallel component, select the \apath of $P_1$ and extend $Q_i$ by it.
	Continue in the same manner up to $P_{k+l}$.
	After this, the whole path $Q_i$ is constructed.
	$Q_1$ is called the \emph{\apath} of $P$ and the remaining paths $Q_2, \dots, Q_m$ are called the \emph{\spaths} of $P$.
 
	The selection of \spaths (and \apaths accordingly) for all parallel components of $G$ is called an \emph{\sed} of $G$.

\end{definition}
An example of an \sed can be found in Figure~\ref{spg:fig:sed_example} in Appendix~\ref{sec:appendix:figures}.

Intuitively, an auxiliary path of a parallel component $P_j$ is a path through the whole component which is \emph{reserved} to be used in higher recursion levels (to eventually become part of an \spath there).
Any edges of an \spath are not used for any \spath or \apath in any higher recursion level.

The main contribution of the \sed is that it gives a mapping from parallel components to paths through the respective components (the \spaths) such that all these paths are edge-disjoint.
More formally:

\begin{restatable}{lemma}{sedexistence}
\label{spg:lem:sed_existence}
	For each \spg $G$, there exists an \sed $\SED$.
	Besides, in any \sed, each edge belongs to at most one \spath in $\SED$.
\end{restatable}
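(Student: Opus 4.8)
The plan is to prove both claims by structural induction on the minimal SP-tree $T$ of $G$ (equivalently, on the size of the component under consideration), following the recursive structure of Definition~\ref{spg:def:sed}.

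\emph{Existence.} I would show, bottom-up on $T$, that for every parallel component $P$ with terminals $s,t$ and every child component $S_i$ of $P$, the path $Q_i$ prescribed in Definition~\ref{spg:def:sed} can in fact be selected as a simple $s$--$t$ path. If $P$ is innermost, every $S_i$ is a simple node sequence, which is already a simple $s$--$t$ path, so picking one of them as the A-path and the rest as S-paths works. In the inductive step, if $S_i$ is a simple node sequence we again take $Q_i = S_i$; otherwise $S_i$ is the series composition of components $P_1,\dots,P_{k+l}$ appearing in this order, and $Q_i$ is the concatenation of the prescribed pieces --- the full sequence $P_j$ when $P_j$ is a simple node sequence, and the A-path of $P_j$ (which exists by the induction hypothesis) when $P_j$ is a parallel component. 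Since in a series composition consecutive $P_j$ share exactly the merged terminal and otherwise have disjoint vertex sets, and each piece is a simple path between the two terminals of the corresponding $P_j$, the concatenation is a simple path from $s_{P_1}=s$ to $t_{P_{k+l}}=t$. Hence every required selection is possible, so an S-decomposition of $G$ exists.

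\emph{Edge-disjointness.} First I would record two structural facts, valid in \emph{any} S-decomposition regardless of the arbitrary choices made. (i) The leaves of $T$ partition $E(G)$ into simple node sequences, and whenever an A-path or S-path meets a simple node sequence it contains all of its edges; consequently every A-path and every S-path is an edge-disjoint union of entire simple node sequences. (ii) If two nodes of $T$ are incomparable, the subgraphs of $G$ they induce are edge-disjoint, since their least common ancestor is an S-node or a P-node and distinct children of such a node induce edge-disjoint subgraphs. Both follow directly from the definitions of series/parallel composition and the recursive construction of the paths. I would then prove by induction on $T$ the combined statement: for every parallel component $P$, the S-paths of $P$ together with the S-paths of all parallel components that are proper descendants of $P$ in $T$ are pairwise edge-disjoint, and moreover the A-path of $P$ is edge-disjoint from every one of these S-paths. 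The base case is immediate: for innermost $P$ the A-path and all S-paths are distinct simple node sequences among the edge-disjoint children of $P$. For the inductive step, write the children of $P$ as $S_1,\dots,S_m$ with $S_1$ supplying the A-path $Q_1$; each $Q_i$ uses only edges of $S_i$ and the $S_i$ are pairwise edge-disjoint, so $Q_2,\dots,Q_m$ are pairwise edge-disjoint and each is disjoint from $Q_1$. Every parallel component strictly below $P$ lies inside a unique $S_i$ and, within that $S_i$, inside a unique parallel-component child $\hat P$ of $S_i$, whose A-path is exactly the stretch of $Q_i$ passing through $\hat P$. Using fact (ii) to separate different $S_i$'s and different children of one $S_i$, and the induction hypothesis for each such $\hat P$ (in particular, that the A-path of $\hat P$ avoids all S-paths at or below $\hat P$), one checks that $Q_i$ is disjoint from all S-paths below $P$ and that no two S-paths below $P$ collide; the same reasoning applied to $S_1$ shows $Q_1$ avoids these S-paths as well. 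Finally, applying the statement to the root of $T$ --- or, if the root is an S-node, to each of its P-node children in combination with fact (ii) --- yields that all S-paths over all parallel components of $G$ are pairwise edge-disjoint, i.e., each edge lies on at most one S-path in $\SED$.

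\emph{Main obstacle.} The one subtlety is choosing the induction hypothesis: because an S-path $Q_i$ of $P$ traverses nested parallel components through their A-paths, ruling out a collision of $Q_i$ with an S-path of something nested inside such a $\hat P$ forces the hypothesis to also assert that the A-path of $\hat P$ avoids all S-paths at or below $\hat P$. Getting this strengthened hypothesis exactly right --- strong enough to push the induction through, yet still true in the base case and under the ``incomparable components'' reductions --- is the crux; the remaining case distinctions are routine given facts (i) and (ii).
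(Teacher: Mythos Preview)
Your proposal is correct and follows essentially the same approach as the paper. For existence, both you and the paper run the same bottom-up induction showing that every parallel component has an A-path; for edge-disjointness, the paper compresses the argument into a single observation---when a path $Q$ is selected at $P$, each of its sections was either previously untyped (coming from an SNS) or was the A-path of a smaller parallel component, so an edge can receive the label ``S-path'' at most once---whereas you unwind this same idea into an explicit strengthened induction hypothesis (the A-path of $\hat P$ avoids all S-paths at or below $\hat P$), which is just the inductive rephrasing of that observation.
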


Provided with the definition of an \sed, we are ready to define the \vaddcost as follows:
\begin{definition}[\Vaddcost]\label{spg:def:vaddcost}
	Let $\ARROPT$ be an (optimal) linear arrangement of a \spg $G$, let $\SED$ be an \sed of $G$, and let $S$ be a series component in $G$.
  Further, let $E_S$ be the set that contains all edges of \snss that are child components of $S$ and all edges of \spaths of the child components of $S$ that are parallel components.
	The \emph{\vaddcost of $S$}, denoted by $\VCOST_{\ARROPT}(S)$, is defined as:
	 \begin{equation*}
	      \VCOST_{\ARROPT}(S) := \ACOST_{\ARROPT}(S) + \sum_{\{x,y\}\in E_S}{length_{\ARROPT(S)}(x,y)}.
	 \end{equation*}
	
For any parallel component $P$ in a \spg $G$ and any optimal linear arrangement $\ARROPT$, 
	\[ \VCOST_{\ARROPT}(P) := \ACOST_{\ARROPT}(P).\]
\end{definition}
Note that the addend in the \vaddcost for \snss is zero (as the set $E_S$ is empty in this case).
As an example, the corresponding set $E_{S_i}$ for the series component $S_i$ from Figure~\ref{spg:fig:sed_example} would contain all normally drawn edges, the two dashed \spaths of $P_A$ and the lower path of $P_B$ as shown in Figure~\ref{spg:fig:sed_example}\subref{spg:fig:sed_examplec}.
Note that $E_{S_i}$ contains a path from $s_{S_i}$ to $t_{S_i}$.

This definition will be helpful for the analysis of the minimum cost of an optimal arrangement.
The \vaddcost adds a certain value to the \addcost of a (series or parallel) component $C$, with the following property:
\begin{restatable}{lemma}{sumofvaddcostatmostthreetimessumofaddcost}\label{spg:lem:sumofvaddcostatmostthreetimessumofaddcost}
	Let $G=(V,E)$ be a \spg, and $\ARROPT$ be an (optimal) linear arrangement for $G$.
	Further, let $\mathcal{C}$ be the set of all (series or parallel) components of $G$.
		It holds:
			\[ \sum_{C \in \mathcal{C}}{\VCOST_{\ARROPT}(C)} \leq 3\cdot \sum_{C\in\mathcal{C}}{\ACOST_{\ARROPT}(C)}. \]
\end{restatable}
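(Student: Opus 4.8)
The plan is to expand $\VCOST_{\ARROPT}$ by its definition and then reduce the claim to a small combinatorial counting statement. Let $\mathcal{C}_S\subseteq\mathcal{C}$ and $\mathcal{C}_P\subseteq\mathcal{C}$ denote the sets of series and of parallel components, respectively. Plugging Definition~\ref{spg:def:vaddcost} into the left-hand side and using that the extra term is $0$ for parallel components, one gets
\[
  \sum_{C\in\mathcal{C}}\VCOST_{\ARROPT}(C)
  = \sum_{C\in\mathcal{C}}\ACOST_{\ARROPT}(C)
  + \sum_{S\in\mathcal{C}_S}\sum_{\{x,y\}\in E_S} length_{\ARROPT(S)}(x,y).
\]
By Observation~\ref{spg:obs:totalcost}, $\sum_{C\in\mathcal{C}}\ACOST_{\ARROPT}(C)=COST_{\ARROPT}(G)$, so it suffices to prove that the double sum on the right is at most $2\cdot COST_{\ARROPT}(G)$.

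First I would get rid of the ``restricted'' lengths: removing nodes from a linear arrangement cannot increase the distance between two nodes that remain, so $length_{\ARROPT(S)}(x,y)\le length_{\ARROPT}(x,y)$ for every edge $\{x,y\}$ of $S$. Hence, writing $N(e)$ for the number of series components $S$ with $e\in E_S$, the double sum is bounded by $\sum_{e\in E} N(e)\cdot length_{\ARROPT}(e)$. It then remains to prove the single claim $N(e)\le 2$ for every edge $e$, after which the double sum is at most $2\sum_{e\in E}length_{\ARROPT}(e)=2\cdot COST_{\ARROPT}(G)$, which finishes the proof.

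To prove $N(e)\le 2$ I would use the structure of the \mspt together with Lemma~\ref{spg:lem:sed_existence}. By the definition of $E_S$, an edge $e$ lies in $E_S$ only because (a)~$e$ belongs to a \sns that is a child component of $S$, or (b)~$e$ belongs to an \spath of a parallel child component of $S$. Case~(a) applies to at most one series component: the \snss partition $E(G)$ (series and parallel compositions merge only vertices, never edges), so $e$ lies in a unique \sns $L$, and $L$ is a child component of at most one \snode, namely its parent in the \mspt if that parent happens to be an \snode. Case~(b) also applies to at most one series component: by Lemma~\ref{spg:lem:sed_existence}, $e$ lies on at most one \spath of the entire \sed, that \spath belongs to a unique parallel component $P$, and in a \mspt the parent of a \pnode (if any) is always an \snode, so $P$ is a child component of at most one series component. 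Summing the contributions of (a) and (b) gives $N(e)\le 2$.

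The main obstacle is exactly this last claim, and within it case~(b): an edge can end up on an \spath indirectly, namely as a piece of the \apath of a deeply nested parallel component that is later spliced into that \spath, so it is not a priori clear that $e$ is charged to the \spaths of only one parallel component. This is precisely what the edge-disjointness part of Lemma~\ref{spg:lem:sed_existence} rules out, so the whole argument rests on having that lemma in hand.
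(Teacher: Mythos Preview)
Your proposal is correct and follows essentially the same route as the paper's proof: both reduce the claim to showing that the extra sum $\sigma$ is at most $2\cdot COST_{\ARROPT}(G)$, and both do this by arguing that each edge is counted at most twice (once via a child-\sns, once via an \spath, the latter using Lemma~\ref{spg:lem:sed_existence}). The one place where you are slightly more careful than the paper is in explicitly invoking $length_{\ARROPT(S)}(x,y)\le length_{\ARROPT}(x,y)$ before comparing $\sigma$ with the total cost; the paper glosses over this step.
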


For the analysis of an optimal arrangement, we also need the following notation:
\begin{definition}[$\Delta_C$]\label{spg:def:delta}
 Given an (optimal) linear arrangement $\ARROPT$ of a \spg $G$, and a (series or parallel) component $C$ in $G$, consider $\ARROPT$ restricted to $C$.
	We denote the smallest number of nodes to the left or to the right (depending on which number is smaller) of a terminal node of $C$ in $\ROPT{C}$ by $\Delta_C$.
\end{definition}

It is convenient to define:
\begin{definition}[Cardinality of a component]
  For any \spg $G$ and any (series or parallel) component $C$ in $G$: $|C|$ is the number of nodes in $C$, $|C^\mt|$ is the number of all nodes in $C$ without the sink of $C$, and $|C^\mst|$ is the number of nodes in $C$ without the two terminal nodes of $C$.
\end{definition}

The following lemma is easy to show:
\begin{lemma}\label{spg:lem:deltabound}
 Let $\ARROPT$ be an (optimal) linear arrangement of a \spg $G$ and let $C$ be a (series or parallel) component.
	It holds:
		\[ \Delta_C \leq \left\lfloor\frac{1}{2}(|C|-2)\right\rfloor. \]
\end{lemma}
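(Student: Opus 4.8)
The plan is to argue directly from the definition of $\Delta_C$ (Definition~\ref{spg:def:delta}) by a one-line counting argument on the positions of the two terminals of $C$ in the restricted arrangement $\ROPT{C}$. First I would note that every (series or parallel) component $C$ has two \emph{distinct} terminal nodes $s_C$ and $t_C$, so $|C|\ge 2$ and the right-hand side $\left\lfloor\frac12(|C|-2)\right\rfloor$ is a well-defined non-negative integer.

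Next, consider $\ROPT{C}$, which maps the $|C|$ nodes of $C$ bijectively onto $\{1,\dots,|C|\}$. Set $p:=\ROPT{C}(s_C)$ and $q:=\ROPT{C}(t_C)$; since the terminals are distinct we have $p\neq q$, and \Wlog assume $p<q$. Let $a:=p-1$ be the number of nodes lying strictly to the left of $s_C$ and let $b:=|C|-q$ be the number of nodes lying strictly to the right of $t_C$; both are non-negative integers. By the definition of $\Delta_C$ (it is a minimum that ranges over both terminals and both sides), choosing $s_C$ with its left side and $t_C$ with its right side gives $\Delta_C\le\min(a,b)$.

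It then remains to bound $\min(a,b)$. The position sets $\{1,\dots,p-1\}$ (of size $a$), $\{q+1,\dots,|C|\}$ (of size $b$), and the two-element set $\{p,q\}$ are pairwise disjoint subsets of $\{1,\dots,|C|\}$, because $1\le p<q\le|C|$; hence $a+b+2\le|C|$, i.e. $a+b\le|C|-2$. Since $a$ and $b$ are non-negative integers, $\min(a,b)\le\left\lfloor\frac{a+b}{2}\right\rfloor\le\left\lfloor\frac12(|C|-2)\right\rfloor$, and combining this with $\Delta_C\le\min(a,b)$ proves the claim. I do not expect any genuine obstacle here; the only points needing a little care are pinning down the (slightly informally phrased) definition of $\Delta_C$ as a minimum over both terminals and both sides, and the elementary fact that the minimum of two non-negative integers whose sum is at most $N$ is at most $\lfloor N/2\rfloor$.
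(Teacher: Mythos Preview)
Your argument is correct. The paper does not actually give a proof of this lemma---it only states that it ``is easy to show''---and the one-line counting you outline is precisely the intended elementary argument: with the two terminal positions $p<q$ in $\ROPT{C}$, the quantities $a=p-1$ and $b=|C|-q$ are both candidates in the minimum defining $\Delta_C$, and $a+b\le |C|-2$ forces $\Delta_C\le\min(a,b)\le\lfloor(|C|-2)/2\rfloor$.
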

%
%

Last, we need the following definition:
\begin{definition}[Spanning interval]\label{spg:def:spanninginterval}
	Given a graph $G$, an arrangement $\ARR$ of $G$, and a set $S$ of nodes, let $u \in S$ such that $\ARR(u) \leq \ARR(s)$ for all nodes $s \in S$, and let $v \in S$ such that $\ARR(v) \geq \ARR(s)$ for all $s \in S$.
	Let $I$ be the set of nodes such that $x \in I$ iff $\ARR(x) \in [\ARR(u), \ARR(v)]$.
	Then, $I$ is the \emph{interval spanning $S$ in $\ARR$}.
\end{definition}

\subsection{A lower bound on the total cost of optimal solutions}\label{spg:subsec:ana:lowerbound}
In this subsection, we give lower bounds on the \vaddcosts of an optimal arrangement for \snss, parallel components, and series components.
In the end, we consolidate the results and state a general lower bound on the total cost of an optimal arrangement.

\begin{restatable}{lemma}{optsns}\label{spg:lem:optsns}
 For any \sns $L$ in a \spg $G$ in an optimal arrangement $\ARROPT$, it holds:
	\[ \VCOST_{\ARROPT}(L) \geq |L| - 1 + \Delta_L. \]  
\end{restatable}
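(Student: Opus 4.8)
The plan is to reduce the statement to a purely combinatorial fact about arranging a path on $|L|$ consecutive positions, and then prove that fact by a cut-counting argument. Since $L$ is a \sns, the set $E_L$ of Definition~\ref{spg:def:vaddcost} is empty, so $\VCOST_{\ARROPT}(L)=\ACOST_{\ARROPT}(L)$; and since $L$ has no child components, $\ACOST_{\ARROPT}(L)=\RCOST_{\ARROPT}(L)$. Hence it suffices to show $\RCOST_{\ARROPT}(L)\ge |L|-1+\Delta_L$. Write $n=|L|$ and recall that $L$ is a path on $n$ nodes with $n-1$ edges whose two endpoints are exactly the terminals $s_L,t_L$. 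Everything below is computed inside $\ROPT{L}$, which places the nodes of $L$ at positions $1,\dots,n$; note $\RCOST_{\ARROPT}(L)$ and $\Delta_L$ are invariant under reflecting this arrangement left-to-right, so I may normalize freely.

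The main tool is the identity $\RCOST_{\ARROPT}(L)=\sum_{j=1}^{n-1}b_j$, where $b_j$ is the number of edges of $L$ joining a node at a position $\le j$ to a node at a position $>j$ (an edge of restricted length $\ell$ contributes to exactly $\ell$ of these cuts). I will use two elementary facts: $b_j\ge 1$ for every $j$, since $L$ is connected and both sides of the cut are nonempty; and $b_j\ge 2$ whenever the set of nodes at positions $\le j$ contains both terminals of $L$ or neither, because walking along the path the number of side-changes is even in exactly those two cases and is at least $1$. Now let $w$ be a terminal realizing $\Delta_L$ and, after possibly reflecting the arrangement, assume the smaller side of $w$ is its left side; then $w$ is at position $\Delta_L+1$ and positions $1,\dots,\Delta_L$ hold the $\Delta_L$ nodes left of $w$. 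Let $z$ be the other terminal.

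I then split on whether $z$ occupies one of the positions $1,\dots,\Delta_L$. If it does not, then for each $j\le\Delta_L$ the left side of the cut contains neither terminal, so $b_j\ge 2$, and summing gives $\RCOST_{\ARROPT}(L)\ge 2\Delta_L+(n-1-\Delta_L)=n-1+\Delta_L$. If $z$ sits at some position $q\le\Delta_L$, then the cuts $j<q$ have both terminals on the right ($b_j\ge 2$), the cuts $j\ge\Delta_L+1$ have both terminals on the left ($b_j\ge 2$), and the remaining $\Delta_L-q+1$ cuts contribute at least $1$ each; adding these up yields $\RCOST_{\ARROPT}(L)\ge 2(q-1)+(\Delta_L-q+1)+2(n-1-\Delta_L)=2n-3+q-\Delta_L$, which is at least $n-1+\Delta_L$ because $q\ge 1$ and $2\Delta_L\le n-2$ by Lemma~\ref{spg:lem:deltabound}. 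The one mildly delicate point is recognizing that in the second case the cuts lying to the right of $w$ automatically have multiplicity at least $2$ (both terminals are crammed to the left), which is precisely what compensates the cheap middle cuts; the rest is bookkeeping that Lemma~\ref{spg:lem:deltabound} closes, and the degenerate subcase $\Delta_L=0$ is absorbed by the first case.
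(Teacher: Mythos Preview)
Your proof is correct and proceeds by a genuinely different route from the paper's. The paper argues geometrically: with $u$ the terminal realizing $\Delta_L$ (assumed on the left half), it looks at the leftmost node $l$ and the rightmost node $r$ of $\ROPT{L}$, takes the subpath $Q$ of $L$ from $u$ to $l$, and splits into two cases. If $Q$ passes through $r$, it decomposes $Q$ into a piece from $u$ to $r$ (arrangement length at least $\Delta_L$) and a piece from $r$ to $l$ (arrangement length at least $|L|-1$); if not, the edge-disjoint subpath from $l$ to $r$ supplies the $|L|-1$ while $Q$ itself supplies the $\Delta_L$. You instead use the cut identity $\RCOST_{\ARROPT}(L)=\sum_{j=1}^{n-1}b_j$ together with the parity observation that a path crosses any cut placing its two endpoints on the same side an even (hence $\geq 2$) number of times. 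Your argument is more mechanical and easier to verify line by line; the paper's argument is more geometric and foreshadows the spanning-interval technique it later uses for series and parallel components.

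One simplification you may want to note: your Case~2 is in fact vacuous. By the definition of $\Delta_L$ as the minimum over \emph{both} terminals and both sides, the other terminal $z$ also has at least $\Delta_L$ nodes on each of its sides, so $z$ cannot occupy any position $q\le\Delta_L$. Hence Case~1 already covers everything, and you never actually need to invoke Lemma~\ref{spg:lem:deltabound}.
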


We now provide a lower bound on the \vaddcost of series components:
\begin{restatable}{lemma}{optsc}\label{spg:lem:optsc}
 For any series component $S$ in a \spg $G$ with $m \geq 2$ child components $P_1, \dots, P_m$ in an optimal arrangement $\ARROPT$, it holds:
	\[
	  \VCOST_{\ARROPT}(S) \geq \frac{1}{2}\left(\sum_{i=1}^m{|P_i^\mt|} - \max_i{|P_i^\mt|}\right) + 1 + \sum_{i=1}^m{\Delta_{P_i}} - \Delta_S.
	\]    
\end{restatable}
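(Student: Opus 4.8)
The plan is to work entirely inside the restricted arrangement $\sigma:=\ROPT{S}$, which places the $|S|$ nodes of $S$ on the positions $\{1,\dots,|S|\}$, and to unfold
\[ \VCOST_{\ARROPT}(S)=\ACOST_{\ARROPT}(S)+\sum_{\{x,y\}\in E_S}length_{\ARROPT(S)}(x,y) \]
(with $E_S$ the edge set from Definition~\ref{spg:def:vaddcost}), bounding the two summands in combination. For a set $F$ of edges write $\ell_\sigma(F):=\sum_{\{x,y\}\in F}|\sigma(x)-\sigma(y)|$, and for a component $C$ write $\ell_\sigma(C)$ for the same sum over all edges of $C$. Since in a series composition every edge of $S$ lies inside exactly one child $P_i$, we get $\RCOST_{\ARROPT}(S)=\sum_i\ell_\sigma(P_i)$, hence
\[ \ACOST_{\ARROPT}(S)=\sum_{i=1}^m\bigl(\ell_\sigma(P_i)-\RCOST_{\ARROPT}(P_i)\bigr)=:\sum_{i=1}^m\mathrm{str}(P_i), \]
and the elementary fact I will use is that $\mathrm{str}(P_i)$ is at least the number of positions occupied by non-$P_i$ nodes inside the interval spanning the nodes of $P_i$ in $\sigma$ (cf.\ Definition~\ref{spg:def:spanninginterval}): $P_i$ is connected, so every gap strictly inside that interval is crossed by an edge of $P_i$, whereas in $\ROPT{P_i}$ these foreign positions disappear. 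For the second summand, $E_S$ contains, for each child, either the whole \sns or one \spath; concatenating one such path per child yields a path $Q=Q_1\cdots Q_m$ from $s_S$ to $t_S$ through all of $P_1,\dots,P_m$, with $Q_i$ a path from $s_{P_i}$ to $t_{P_i}$ inside $P_i$. Therefore $\sum_{\{x,y\}\in E_S}length_{\ARROPT(S)}(x,y)\ge\ell_\sigma(Q)=\sum_i\ell_\sigma(Q_i)$, and altogether
\[ \VCOST_{\ARROPT}(S)\ \ge\ \sum_{i=1}^m\bigl(\mathrm{str}(P_i)+\ell_\sigma(Q_i)\bigr). \]

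The heart of the proof is then a per-child charging. For an \sns child this is immediate and in fact yields the stronger bound $|P_i^\mt|+\Delta_{P_i}$, because $\ell_\sigma(Q_i)=\ell_\sigma(P_i)\ge\RCOST_{\ARROPT}(P_i)\ge|P_i|-1+\Delta_{P_i}$ by Lemma~\ref{spg:lem:optsns} applied to that \sns in the arrangement $\ROPT{P_i}$. For a parallel child $P_i$ I split on the distance $d_i:=|\sigma(s_{P_i})-\sigma(t_{P_i})|$: if $d_i$ is large, the subpath $Q_i$ (in the extreme case a single \spath edge stretched across the bulk of $P_i$) is already long enough; if $d_i$ is small, then the $\approx|P_i^\mt|$ internal nodes of $P_i$ must bulge to one side of $\{\sigma(s_{P_i}),\sigma(t_{P_i})\}$, and that bulge is accounted for either by $\mathrm{str}(P_i)$ (when it breaks contiguity of $P_i$) or by the stretch of whichever flanking child $P_{i\pm1}$ owns the terminal the bulge hides behind. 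Since a single stretch term $\mathrm{str}(P_j)$ can be claimed by both children $P_{j-1},P_{j+1}$ flanking it, the safe charging weight is $\tfrac12$ — exactly the factor in the statement; the maximal child cannot be charged this way, since it may itself be the dominant bulge that everyone else charges against, which is why $\max_i|P_i^\mt|$ is excluded. The residual $+1$ is the length of the segment of $Q$ incident to $s_S$ (or $t_S$), which is at least $1$. Finally, the $-\Delta_S$ is slack that must be conceded: when $s_S$ and $t_S$ themselves sit far inside $\sigma$, some children's internal material is forced past them into the (at most two) boundary regions of $\sigma$, whose combined size is controlled by $\Delta_S$; this material can then only be recovered through the factor $\tfrac12$, which matches the $-\Delta_S$ correction.

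I expect the bookkeeping of the charging to be the main obstacle: one has to fix, once and for all, a rule assigning each ``bulge unit'' of each child to a single target among (its own stretch, a neighbour's stretch, the length of its own $Q_i$, or the $\Delta_S$ reserve), verify that every unit is assigned, and verify that no target is overloaded beyond the factor $2$ tolerated by the $\tfrac12$, all while simultaneously carrying the $\Delta_{P_i}$ refinements — which arise from the same ``endpoint near the boundary'' estimate used for \snss in Lemma~\ref{spg:lem:optsns}. A secondary nuisance is the boundary cases $a=1$ and $a=m$, where $s_S$ or $t_S$ coincides with a terminal of the maximal child so that several of the quantities degenerate; these are absorbed by the $+1$ and $-\Delta_S$ terms. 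Once the charging is set up, summing the per-child bounds and adding the contributions of the maximal child and of $s_S,t_S$ gives exactly the claimed inequality.
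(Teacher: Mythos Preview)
Your opening decomposition is correct and matches the paper: write $\VCOST_{\ARROPT}(S)=\ACOST_{\ARROPT}(S)+\ell_\sigma(E_S)$, observe $\ACOST_{\ARROPT}(S)=\sum_i\mathrm{str}(P_i)$ with $\mathrm{str}(P_i)\ge$ (foreign nodes in the span of $P_i$), and extract from $E_S$ a single $s_S$--$t_S$ path $Q=Q_1\cdots Q_m$. From this point on, however, you diverge into a per-child charging scheme that you do not actually carry out, and the paper takes a much shorter global route that bypasses all of your anticipated bookkeeping.

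The paper's argument: let $V_S$ be the set of endpoints of edges in $E_S$ and $I$ the interval spanning $V_S$ in $\sigma$. The path $Q$ alone gives $\ell_\sigma(E_S)\ge|I|-1$. For the stretch part, one does \emph{not} work child by child; instead one looks only at the two regions of $\sigma$ outside $I$. Let $A$ be the maximal run of consecutive positions starting at the leftmost node of $\sigma$ that all lie in the same child $P_A$ (and are not in $V_S$), and $B$ the analogous run on the right. Every node outside $I$ and outside $A\cup B$ forces an additional stretching of some edge by $1$ (two disjoint paths witness this: one from $P_A$ into $V_S$, and one from the first non-$P_A$ node into $V_S$). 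Adding these two pieces gives
\[
\VCOST_{\ARROPT}(S)\ \ge\ |S|-1-|A|-|B|\ =\ \sum_i|P_i^\mt|-|A|-|B|.
\]
Then $|B|\le\Delta_S$ (assuming w.l.o.g.\ $|A|\ge|B|$) and $|A|\le|P_A^\mt|-\Delta_{P_A}$, and finally the halving step $|P_i^\mt|\ge\tfrac12|P_i^\mt|+\Delta_{P_i}$ (Lemma~\ref{spg:lem:deltabound}) converts $\sum_{i\ne A}|P_i^\mt|$ into the form in the statement.

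Several of your intuitions about the constants are therefore off. The factor $\tfrac12$ does \emph{not} come from a stretch term being claimed by two neighbours; it comes from trading half of each $|P_i^\mt|$ for $\Delta_{P_i}$ via Lemma~\ref{spg:lem:deltabound}. The excluded term is not ``the maximal child because it is the dominant bulge''; the proof drops the single child $P_A$ containing the leftmost block and only afterwards upper-bounds $|P_A^\mt|$ by $\max_i|P_i^\mt|$. The $-\Delta_S$ is exactly the bound $|B|\le\Delta_S$ on the smaller boundary block, not a vague reserve. And the ``boundary cases $a=1$ and $a=m$'' you mention belong to the \emph{algorithm's} arrangement (Section~\ref{spg:subsec:alg:sc}); the index $a$ plays no role in this lower-bound lemma, which is about an arbitrary optimal arrangement.

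Your per-child plan is not obviously wrong, but it is substantially harder than necessary, you have not executed the charging, and the heuristic justifications you give for the shape of the bound do not survive contact with the actual argument. Replacing the per-child analysis by the single global interval $I$ and the two boundary blocks $A,B$ is the missing idea.
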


In the following, we present the roadmap for the proof of Lemma~\ref{spg:lem:optsns}. 
The full proof is given in Appendix~\ref{sec:appendix:proof}.

\begin{proofsketch}
 The idea of the proof is the following:
 First of all, we note that (by Definition~\ref{spg:def:vaddcost}) the \vaddcost of $S$ is the sum of two values: 
 the \addcost of $S$, and the sum over certain restricted edge lengths (the latter sum we denote by $\sigma$).
 Thus, we take a closer look at how the \addcost arises and at which edge lengths contribute to the value of $\sigma$.
       
    For the additional term $\sigma$, observe that it is the sum of lengths (restricted to $S$) of all edges in the set $E_S$. This set contains all edges of \snss that are child components of $S$ and all edges of \spaths of the parallel child components of $S$.
    
  For the \addcost of $S$, observe that they arise in the following way:
    An edge $e$ has a greater length in the arrangement $\ARROPT$ restricted to $S$ than in the arrangement $\ARROPT$ restricted to the child component $P_i$ of $S$ that $e$ is from.
    We call the difference in the length of $e$ the \emph{additional stretching} of $e$.
    Then, the sum of the additional stretchings of all edges is the \addcost of $S$.

    These two insights now enable use to determine a lower bound on the \vaddcost of $S$ by giving a lower bound on $\sigma$ and on the sum of additional stretchings individually.
    
    For the lower bound on $\sigma$, let $V_S$ be the set of nodes that contains all the endpoints of edges in $E_S$ and denote the interval spanning the set $V_S$ in $\ROPT{S}$ by $I$ (see Def.~\ref{spg:def:spanninginterval}).
    Since there exists a simple path from the leftmost node in $V_S$ (\wrt $\ROPT{S}$) to the rightmost node in $V_S$ consisting of edges from $E_S$ only, $\sigma$ is at least $|I|-1$.
    
    For the lower bound on the sum of the additional stretchings, we consider nodes whose positions are to the left of those in $I$ (the \emph{part left of $I$}) and nodes whose positions are to the right of those in $I$ (the \emph{part right of $I$}) individually.
    We show that for any node in the part left of $I$, except for nodes in a sequence of nodes that starts with the leftmost node (this sequence will be called $A$), at least one edge is stretched by an amount of one due to this node.
    The analog can be shown for the part right of $I$ (there, the corresponding sequence is called $B$).
    This yields a lower bound on the sum of additional stretchings linear in the number of these nodes (which is $|S|-|I|-|A|-|B|$). 
    
    All in all, this way we can find a lower bound on the amortized cost of $S$ that is $|S| - 1 - |A| - |B|$.
    Applying some upper bounds on the sizes of $A$ and $B$, we can rewrite this inequation as in the original claim of the lemma.
\end{proofsketch}

Using a similar technique as in the previous proof, we can also derive a lower bound for the \vaddcost of parallel components:
\begin{restatable}{lemma}{optpc}\label{spg:lem:optpc}
 For any parallel component $P$ in a \spg $G$ with $m \geq 2$ child components $S_1, \dots, S_m$, in an optimal arrangement $\ARROPT$, it holds:
	\[ \VCOST_{\ARROPT}(P) \geq \frac{1}{2}\left(\sum_{i=1}^m{|S_i^\mst|} - \max_i{|S_i^\mst|}\right) + \sum_{i=1}^m{\Delta_{S_i}} - \Delta_P. \]
\end{restatable}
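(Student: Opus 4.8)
The plan is to mirror, almost verbatim, the charging argument that proves the series‑component bound (Lemma~\ref{spg:lem:optsc}); the one simplification is that, by Definition~\ref{spg:def:vaddcost}, a parallel component carries no extra $\sigma$‑term, so everything has to be squeezed out of the additional stretching of edges: $\VCOST_{\ARROPT}(P)=\ACOST_{\ARROPT}(P)=\RCOST_{\ARROPT}(P)-\sum_i\RCOST_{\ARROPT}(S_i)$. Since distinct children of $P$ share only the terminals $s_P$ and $t_P$, the edge sets $E_{S_1},\dots,E_{S_m}$ partition $E_P$, and Definition~\ref{spg:def:addcostcomponent} rewrites the cost as $\ACOST_{\ARROPT}(P)=\sum_i\sum_{e\in E_{S_i}}\bigl(length_{\ROPT{P}}(e)-length_{\ROPT{S_i}}(e)\bigr)$. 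Because restriction preserves the relative order of the surviving nodes, for $e\in E_{S_i}$ the term $length_{\ROPT{P}}(e)-length_{\ROPT{S_i}}(e)$ is exactly the number of nodes outside $S_i$ whose $\ROPT{P}$‑position lies strictly between the two endpoints of $e$. Re‑indexing the double sum by the \emph{spanned} node, I get $\ACOST_{\ARROPT}(P)=\sum_w c(w)$, where $w$ runs over the non‑terminal nodes of $P$ and $c(w)$ counts the pairs $(S_i,e)$ with $e\in E_{S_i}$, $w\notin S_i$ and $e$ spanning $w$ in $\ROPT{P}$.

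\textbf{Charging one unit per node.} Fix an \sed of $G$ (Lemma~\ref{spg:lem:sed_existence}); it provides inside every child $S_i$ of $P$ a simple $s_P$--$t_P$ path, and since the $E_{S_i}$ are disjoint these $m$ paths use disjoint edge sets. If $w$ lies strictly between $s_P$ and $t_P$ in $\ROPT{P}$, then for every child $S_i$ not containing $w$ the selected path inside $S_i$ runs from one side of $w$ to the other, hence contains an edge spanning $w$; so $c(w)\ge m-1\ge 1$. If instead $w$ lies to the left of both terminals, then any child $S_i$ other than $w$'s own child that happens to own a node to the left of $w$ is connected and contains $s_P$ to the right of $w$, so it too contributes an edge spanning $w$; hence $c(w)\ge 1$ unless \emph{every} node to the left of $w$ in $\ROPT{P}$ belongs to $w$'s own child. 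Symmetrically on the right. It follows that the nodes with $c(w)=0$ all lie in the maximal prefix $A$ of $\ROPT{P}$ made up of non‑terminal nodes of one common child, or in the analogous maximal suffix $B$, whence the intermediate bound
\[ \ACOST_{\ARROPT}(P) \geq \sum_{i=1}^m|S_i^{\mst}| - |A| - |B| . \]

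\textbf{Shrinking the boundary runs.} Let $A\subseteq S_\alpha^{\mst}$ and $B\subseteq S_\beta^{\mst}$. Since $A$ occupies the $|A|$ leftmost positions of $\ROPT{P}$ and $A\subseteq S_\alpha$, it occupies the $|A|$ leftmost positions of $\ROPT{S_\alpha}$, while $s_P$ and $t_P$ occupy two distinct later positions; thus the rightmost terminal of $S_\alpha$ has at least $|A|+1$ nodes of $S_\alpha$ to its left in $\ROPT{S_\alpha}$, and since $\Delta_{S_\alpha}$ is at most the number of $S_\alpha$‑nodes to the right of that terminal (Definition~\ref{spg:def:delta}), $|A|\le|S_\alpha^{\mst}|-\Delta_{S_\alpha}$. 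Symmetrically $|B|\le|S_\beta^{\mst}|-\Delta_{S_\beta}$ when $\alpha\neq\beta$, and $|A|+|B|\le|S_\alpha^{\mst}|$ when $\alpha=\beta$. Plugging these into the intermediate bound, using Lemma~\ref{spg:lem:deltabound} (so $\Delta_{S_i}\le\lfloor|S_i^{\mst}|/2\rfloor$) for the remaining children, and splitting into the cases $\alpha=\beta$ and $\alpha\neq\beta$ (and, inside the latter, according to which of $S_\alpha,S_\beta$ is a largest child) reduces the statement to a purely numerical inequality among the $|S_i^{\mst}|$, the $\Delta_{S_i}$ and $\Delta_P$; the closing observation is that when two large children sit at the two ends of $\ROPT{P}$ both terminals of $P$ are forced deep inside it, so $\Delta_P$ is large and the term $-\Delta_P$ soaks up the slack — this is the mechanism that produces the factor $\tfrac12$ in front of $\sum_i|S_i^{\mst}|-\max_i|S_i^{\mst}|$.

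\textbf{Main obstacle.} The first two steps are routine; the work is in the last step. The charging only hands out a single unit of additional stretching per non‑terminal node, which taken at face value misses the target by roughly half the size of the second‑largest child whenever the adversary packs the two largest children against the two ends of $\ROPT{P}$. Recovering this gap — balancing $|A|$, $|B|$, $\sum_i\Delta_{S_i}$ and $\Delta_P$ by the same position‑by‑position bookkeeping that controls the runs $A$ and $B$ in the proof of Lemma~\ref{spg:lem:optsc} — is the delicate part of the argument.
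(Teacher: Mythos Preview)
Your charging argument up to the intermediate bound $\ACOST_{\ARROPT}(P)\ge\sum_i|S_i^{\mst}|-|A|-|B|$ is correct and is exactly what the paper does (the paper phrases it via two explicit paths rather than a per–node count, but the content is the same).

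The gap is in your third step. You bound \emph{both} runs through their host children, $|A|\le|S_\alpha^{\mst}|-\Delta_{S_\alpha}$ and $|B|\le|S_\beta^{\mst}|-\Delta_{S_\beta}$, and then declare the remaining ``purely numerical inequality'' delicate, hoping that $-\Delta_P$ will absorb the slack. But with those two bounds alone the numerical inequality can fail: take $m=2$, put almost all of $S_1$ in the prefix $A$, and place $v$ at the next‑to‑last position so that $|B|=1$ and $\Delta_P=1$; your bounds collapse to $\sum_i|S_i^{\mst}|-|S_1^{\mst}|-|S_2^{\mst}|=0$, far below the target. Your ``closing observation'' (large runs at both ends force $\Delta_P$ large) does not apply here because only one run is large.

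The paper avoids this entirely with a one‑line observation you almost state but never use: assume w.l.o.g.\ $|A|\ge|B|$; since $B$ is a block of non‑terminal nodes at one extreme of $\ROPT{P}$, it lies wholly to one side of both terminals, hence $|B|\le\Delta_P$ by Definition~\ref{spg:def:delta}. Use this for the \emph{smaller} run and the child‑based bound $|A|\le|S_A^{\mst}|-\Delta_{S_A}$ only for the \emph{larger} one. Then
\[
\sum_i|S_i^{\mst}|-|A|-|B|\ \ge\ \sum_{i\neq A}|S_i^{\mst}|+\Delta_{S_A}-\Delta_P\ \ge\ \tfrac12\sum_{i\neq A}|S_i^{\mst}|+\sum_i\Delta_{S_i}-\Delta_P,
\]
the last step by Lemma~\ref{spg:lem:deltabound}, and replacing $|S_A^{\mst}|$ by $\max_i|S_i^{\mst}|$ finishes. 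No case split on $\alpha=\beta$ versus $\alpha\neq\beta$ is needed, and the factor $\tfrac12$ comes from Lemma~\ref{spg:lem:deltabound}, not from $\Delta_P$.
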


These three lower bounds for the different types of components in any \spg can be combined into a single lower bound:
\begin{restatable}{corollary}{opttotal}\label{spg:cor:opttotal}
 Let $G=(V,E)$ be an arbitrary \spg and $\ARROPT$ an optimal arrangement of $G$.
	Further, denote the total cost of $\ARROPT$ by $COST_{\ARROPT}(G)$, the set of \snss in $G$ by $L_G$, the set of parallel components by $P_G$, the set of series components by $S_G$.
	Then, it holds:
	\begin{align*}
	 7 \cdot COST_{\ARROPT}(G) \geq& \sum_{L \in L_G}{2\cdot(|L|-1)}
														+ \sum_{P \in P_G}{\left(\sum_{S_i \in P}{|S_i^\mst|} - \max_{S_i \in P}{|S_i^\mst|}\right)}\\
														&+ \sum_{S \in S_G}{\left(\sum_{P_i \in S}{|P_i^\mt|} - \max_{P_i \in S}{|P_i^\mt|}\right)}. 
	\end{align*}
\end{restatable}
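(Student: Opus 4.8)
The plan is to sum the three component-wise lower bounds of Lemmas~\ref{spg:lem:optsns}, \ref{spg:lem:optsc} and~\ref{spg:lem:optpc} over all components of $G$ and then to turn the resulting estimate for $\sum_{C\in\mathcal{C}}\VCOST_{\ARROPT}(C)$ into an estimate for $COST_{\ARROPT}(G)$. Here $\mathcal{C} = L_G \disunion P_G \disunion S_G$ is the set of all components, so by Lemma~\ref{spg:lem:sumofvaddcostatmostthreetimessumofaddcost} and Observation~\ref{spg:obs:totalcost},
\[ \sum_{C\in\mathcal{C}}\VCOST_{\ARROPT}(C) \;\le\; 3\sum_{C\in\mathcal{C}}\ACOST_{\ARROPT}(C) \;=\; 3\,COST_{\ARROPT}(G). \]
Writing $R$ for the right-hand side of the claimed inequality, it will therefore suffice to prove $\sum_{C\in\mathcal{C}}\VCOST_{\ARROPT}(C) \ge \frac{1}{2}R - \Delta_G$ and then to absorb the stray term $2\Delta_G$ into one additional copy of $COST_{\ARROPT}(G)$.

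When the three lower bounds are added, the ``size'' parts already assemble exactly to $\frac{1}{2}R$: each \sns $L$ contributes $|L|-1 = \frac{1}{2}\cdot 2(|L|-1)$, each parallel component $P$ contributes $\frac{1}{2}\bigl(\sum_{S_i\in P}|S_i^\mst|-\max_{S_i\in P}|S_i^\mst|\bigr)$, and each series component $S$ contributes $\frac{1}{2}\bigl(\sum_{P_i\in S}|P_i^\mt|-\max_{P_i\in S}|P_i^\mt|\bigr)$. On top of that the sum carries the nonnegative term $|S_G|$ (the ``$+1$'' from Lemma~\ref{spg:lem:optsc} per series component), which I would simply discard, together with the $\Delta$-terms
\[ \Theta \;:=\; \sum_{L\in L_G}\Delta_L \;+\; \Bigl(\sum_{P\in P_G}\sum_{S_i\in P}\Delta_{S_i} + \sum_{S\in S_G}\sum_{P_i\in S}\Delta_{P_i}\Bigr) \;-\; \Bigl(\sum_{P\in P_G}\Delta_P + \sum_{S\in S_G}\Delta_S\Bigr). \]
To evaluate $\Theta$ I would exploit the tree structure of the fixed \mspt: every non-root component is a child component of exactly one component, and since leaves are \snss while inner nodes alternate in type, the parent of an \snode is a \pnode, the parent of a \pnode is an \snode, and the parent of a \sns is one of the two. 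Hence the middle pair of brackets enumerates each non-root component exactly once and so equals $\sum_{C\in\mathcal{C}\setminus\{G\}}\Delta_C$, while the last bracket equals $\sum_{C\in P_G\cup S_G}\Delta_C$. Substituting and cancelling yields $\Theta = 2\sum_{L\in L_G}\Delta_L - \Delta_G \ge -\Delta_G$ (the degenerate case where $G$ is a single \sns is consistent). Altogether $\sum_{C\in\mathcal{C}}\VCOST_{\ARROPT}(C) \ge \frac{1}{2}R + |S_G| + \Theta \ge \frac{1}{2}R - \Delta_G$.

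Combining with the first display gives $R \le 2\sum_{C\in\mathcal{C}}\VCOST_{\ARROPT}(C) + 2\Delta_G \le 6\,COST_{\ARROPT}(G) + 2\Delta_G$, so it only remains to check $2\Delta_G \le COST_{\ARROPT}(G)$. For this I would combine Lemma~\ref{spg:lem:deltabound}, which gives $\Delta_G \le \lfloor\frac{1}{2}(|G|-2)\rfloor \le \frac{1}{2}(|V|-2)$, with the trivial fact that a connected graph has $COST_{\ARROPT}(G) \ge |V|-1$ (for each $i\in\{1,\dots,|V|-1\}$ at least one edge joins a node at a position $\le i$ to a node at a position $> i$, and summing these unit contributions over all $i$ gives $|V|-1$). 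Thus $2\Delta_G \le |V|-2 < |V|-1 \le COST_{\ARROPT}(G)$, whence $R \le 7\,COST_{\ARROPT}(G)$, which is the claim.

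The work here is essentially all bookkeeping of the $\Delta$-terms, and the one point that must be gotten right is that a \sns's $\Delta_L$ enters the summed bound with coefficient $+2$ — once from its own bound in Lemma~\ref{spg:lem:optsns} and once from the bound of its parent — so that the only uncancelled negative term is the $-\Delta_G$ coming from the root; the surviving nonnegative terms $|S_G|$ and $2\sum_{L\in L_G}\Delta_L$ are then harmlessly dropped. The only genuinely new ingredient beyond adding inequalities is the final bound $2\Delta_G \le COST_{\ARROPT}(G)$, obtained from Lemma~\ref{spg:lem:deltabound} together with connectivity.
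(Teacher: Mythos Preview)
Your proof is correct and follows essentially the same route as the paper: sum the three component-wise lower bounds, telescope the $\Delta$-terms via the \mspt structure to leave only $-\Delta_G$, and absorb the residual $2\Delta_G$ with one extra copy of $COST_{\ARROPT}(G)$. Your bookkeeping is in fact tighter than the paper's in two places --- you track the surplus $2\sum_{L\in L_G}\Delta_L$ explicitly (the paper silently drops it), and your final step $2\Delta_G \le |V|-2 < |V|-1 \le COST_{\ARROPT}(G)$ via Lemma~\ref{spg:lem:deltabound} is the correct justification (the paper's stated bounds $\Delta_O \le |V|$ and $COST_{\ARROPT}(G)\ge |V|$ are, respectively, too weak and not quite true for a bare path, and as written only yield $\Delta_O \le COST_{\ARROPT}(G)$ rather than the needed $2\Delta_O \le COST_{\ARROPT}(G)$).
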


\subsection{An upper bound on the total cost of \spgalg arrangements}\label{spg:subsec:ana:upperbound}
For the approximation ratio of the \spgalg, we also need to find an upper bound on the cost of arrangements computed by the \spgalg.
One can show the following result:

\begin{restatable}{corollary}{algtotal}\label{spg:cor:algtotal}
 Let $G=(V,E)$ be an arbitrary \spg and let $\ARRALG$ be an arrangement of $G$ computed by the \spgalg.
	Furthermore, denote the total cost of $\ARRALG$ by $COST_{\ARRALG}(G)$, the set of \snss in $G$ by $L_G$, the set of parallel components by $P_G$, the set of series components by $S_G$.
	Then, it holds:
	\begin{align*}
		COST_{\ARRALG}(G) \leq& \sum_{L \in L_G}{2\cdot(|L|-1)}
			+ \sum_{P \in P_G}{2D^2\cdot\left(\sum_{S_i \in P}{|S_i^\mst|} - \max_{S_i \in P}{|S_i^\mst|}\right)}\\
		 &+ \sum_{S \in S_G}{2D\cdot\left(\sum_{P_i \in S}{|P_i^\mt|} - \max_{P_i \in S}{|P_i^\mt|}\right)}.
	\end{align*}
\end{restatable}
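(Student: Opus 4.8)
The plan is to reduce the claim to three per‑component upper bounds and then add them up using Observation~\ref{spg:obs:totalcost}. By that observation, $COST_{\ARRALG}(G)=\sum_{C\in\mathcal C}\ACOST_{\ARRALG}(C)$, where $\mathcal C$ is the union of the set $L_G$ of \snss, the set $P_G$ of parallel components, and the set $S_G$ of series components, and these three sets are pairwise disjoint. Hence it suffices to establish, for a single component of each type, that $\ACOST_{\ARRALG}(L)\le 2(|L|-1)$ for every \sns $L$, that $\ACOST_{\ARRALG}(P)\le 2D^2\bigl(\sum_{S_i\in P}|S_i^\mst|-\max_{S_i\in P}|S_i^\mst|\bigr)$ for every parallel component $P$, and that $\ACOST_{\ARRALG}(S)\le 2D\bigl(\sum_{P_i\in S}|P_i^\mt|-\max_{P_i\in S}|P_i^\mt|\bigr)$ for every series component $S$; summing these three inequalities over all components of $G$ yields the corollary. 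I would isolate the three inequalities as separate lemmas, one per component type.

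The bound for a \sns $L$ is immediate: $L$ has no child components, so $\ACOST_{\ARRALG}(L)=\RCOST_{\ARRALG}(L)$, and the \spgalg arranges the $|L|$ nodes on the path (labelled $1,\dots,|L|$ along the path) in the zig‑zag order $1,|L|,2,|L|-1,3,\dots$. Writing down the position of node $j$ as a function of $j$ and $|L|$, one checks that every path edge $\{j,j+1\}$ has length $1$ or $2$ in this arrangement; since $L$ has exactly $|L|-1$ edges, $\RCOST_{\ARRALG}(L)\le 2(|L|-1)$.

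For the parallel and series components I would argue via the \emph{additional stretching} of edges: when the \spgalg builds a component $C$ out of its already‑arranged child components, it lays the nodes of each child into one contiguous block, preserving the internal order of that child's arrangement (up to possibly reversing the whole block, in the series case), so by Definition~\ref{spg:def:addcostcomponent} the sum over all edges of $C$ of the amount by which an edge grows equals $\ACOST_{\ARRALG}(C)$. The crucial structural point is that an edge both of whose endpoints lie strictly inside one child block does not grow at all, so only edges incident to a terminal of some child component can contribute: for a parallel composition these are the edges at the common source and sink, and for a series composition the edges at the source, the sink, and the nodes shared by two consecutive children. For such an edge, the amount it grows is bounded by the total number of nodes in the child blocks that end up between its two endpoints, and because the \spgalg always places a biggest child into the extreme position (rightmost for a parallel component, last in the prescribed order for a series component), that quantity never includes the biggest child and is therefore at most $\sum_i|C_i^{\star}|-\max_i|C_i^{\star}|$, with $C_i^{\star}=S_i^\mst$ in the parallel case and $C_i^{\star}=P_i^\mt$ in the series case. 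Multiplying by the number of terminal‑incident edges, which is bounded in terms of the maximum degree $D$ (each terminal has degree at most $D$), gives the stated $2D$ and $2D^2$ factors, which are safe upper bounds here.

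I expect the main obstacle to lie in the series‑component case. There the arrangement first places the global terminals $s_S,t_S$, then orders the children as $P_1,\dots,P_{a-1},P_m,\dots,P_{a+1},P_a$ with the tail components \emph{flipped}, and the rule that each $P_i$ keeps its source but lends its sink to $P_{i+1}$ moves shared terminal nodes away from the blocks of the children they belong to. Pinning down, for each edge, exactly which child blocks (and which of $s_S,t_S$) lie between its two endpoints after all this re‑ordering, and checking that the total growth summed over all edges incident to a fixed shared terminal stays within the required bound — in particular that the block of the biggest child never inflates any of these edges — is the delicate part. One must also keep the cases ``$P_i$ is a \sns'' and ``$P_i$ is a parallel component'' consistent, and verify that using $|P_i^\mt|$ (which drops the shared sink) rather than $|P_i|$ is precisely what avoids double‑counting the shared nodes. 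Once these position shifts are made precise, the three inequalities and hence the corollary follow from the summation argument above.
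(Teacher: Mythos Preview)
Your overall plan is exactly the paper's: decompose $COST_{\ARRALG}(G)$ via Observation~\ref{spg:obs:totalcost} into $\ACOST_{\ARRALG}$-contributions, prove one upper bound per component type (these are the paper's Lemmas~\ref{spg:lem:algsns}, \ref{spg:lem:algpc}, \ref{spg:lem:algsc}), and sum. The \sns and parallel arguments you sketch match the paper's proofs.

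There is, however, a real gap in your series argument. You propose to bound each terminal-incident edge's growth by $\sum_i|P_i^\mt|-\max_i|P_i^\mt|$ and then multiply by the number of such edges. But a series component with $m$ children has $m+1$ terminal nodes (the two outer terminals and the $m-1$ shared ones), hence up to $(m+1)D$ terminal-incident edges; your product would be $(m+1)D\bigl(\sum_i|P_i^\mt|-\max_i|P_i^\mt|\bigr)$, which does \emph{not} give the claimed $2D$ factor, since $m$ is unbounded in terms of $D$. The paper does not bound edge-count times maximal stretching; instead it bounds, for each moved terminal node, the \emph{distance} by which that node is displaced in $\RALG{S}$, and shows that the \emph{sum} of all these displacements is at most $2\bigl(\sum_i|P_i^\mt|-\max_i|P_i^\mt|\bigr)$. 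Only then does it multiply by $D$ (each displaced node has at most $D$ incident edges, each stretched by at most its displacement). The crucial point is that while two of the terminals (the sink of $P_{a-1}$ and the sink of $P_m$) may each move by almost $\sum-\max$, all the remaining terminals move only by roughly $|P_j^\mt|$ for their own $j$, so the total telescopes to $2(\sum-\max)$ rather than $(m+1)(\sum-\max)$. Your later remark about ``the total growth summed over all edges incident to a fixed shared terminal'' is heading in the right direction, but the accounting you need is per displaced node, summed over all of them, not per edge with a uniform per-edge bound.
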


\subsection{The approximation ratio of \spgapproxratio}\label{spg:subsec:ana:final}
Finally, based on the groundwork of the previous subsections, proving the main theorem of this chapter is straightforward.
\begin{restatable}{theorem}{thmapproxratio}
	For a \spg $G$, let $\ARRALG$ be the linear arrangement of $G$ computed by the \spgalg, and let $\ARROPT$ be an optimal linear arrangement of $G$.
	It holds:
		\[ COST_{\ARRALG}(G) \leq 14 \cdot D^2 \cdot COST_{\ARROPT}(G). \]
\end{restatable}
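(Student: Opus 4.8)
The plan is to combine the two corollaries from the previous subsections directly; essentially no further work is required. Corollary~\ref{spg:cor:algtotal} writes $COST_{\ARRALG}(G)$ as a sum of three contributions --- one over the \snss, one over the parallel components, and one over the series components --- whose summands are precisely the summands appearing in the lower bound of Corollary~\ref{spg:cor:opttotal}, only multiplied by the factors $1$, $2D^2$, and $2D$ respectively. So the argument reduces to observing that each of these three factors is bounded by $2D^2$.

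First I would note that every \spg contains at least one edge (by definition it is built from single-edge \ttgs), so $D \geq 1$ and hence $1 \leq 2D \leq 2D^2$. Consequently every summand on the right-hand side of Corollary~\ref{spg:cor:algtotal} is at most $2D^2$ times the corresponding summand on the right-hand side of Corollary~\ref{spg:cor:opttotal}. Summing these termwise inequalities over all \snss, all parallel components, and all series components of $G$ gives
\[
COST_{\ARRALG}(G) \leq 2D^2 \left( \sum_{L \in L_G}{2(|L|-1)} + \sum_{P \in P_G}{\Bigl(\sum_{S_i \in P}{|S_i^\mst|} - \max_{S_i \in P}{|S_i^\mst|}\Bigr)} + \sum_{S \in S_G}{\Bigl(\sum_{P_i \in S}{|P_i^\mt|} - \max_{P_i \in S}{|P_i^\mt|}\Bigr)} \right).
\]
Then I would invoke Corollary~\ref{spg:cor:opttotal}, which bounds the bracketed quantity by $7 \cdot COST_{\ARROPT}(G)$, to obtain $COST_{\ARRALG}(G) \leq 2D^2 \cdot 7 \cdot COST_{\ARROPT}(G) = 14 D^2 \cdot COST_{\ARROPT}(G)$, as claimed.

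I do not expect any real obstacle at this stage: all of the difficulty has already been absorbed into Corollaries~\ref{spg:cor:opttotal} and~\ref{spg:cor:algtotal}, whose proofs in turn rest on the \vaddcost machinery, the \sed, and the charging argument. The only points requiring a little care are lining up the coefficients so that a single common factor $2D^2$ can be pulled out of all three sums, and confirming that the degree bound $D \geq 1$ indeed holds for every \spg so that $2D^2$ dominates the factors $1$ and $2D$.
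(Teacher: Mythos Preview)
Your proposal is correct and matches the paper's own proof essentially line for line: both arguments pull a common factor of $2D^2$ out of the upper bound from Corollary~\ref{spg:cor:algtotal} (using $D \geq 1$ so that $1 \leq 2D \leq 2D^2$) and then invoke Corollary~\ref{spg:cor:opttotal} to bound the resulting bracketed sum by $7\cdot COST_{\ARROPT}(G)$.
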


\subsection{Runtime}\label{spg:subsec:ana:runtime}
Regarding the runtime of the \spgalg, one can show the following result:
\begin{restatable}{theorem}{runtime}\label{spg:lem:runtime}
 On a \spg $G=(V,E)$, the \spgalg has a runtime of $O(|E|)$ if a \mspt of $G$ is given as an input, and a runtime of $O(|E|\log{|E|})$ otherwise.
\end{restatable}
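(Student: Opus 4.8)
To establish the two running-time bounds, the plan is to show that a run of the \spgalg amounts to a traversal of a \mspt $T$ of $G$ in which each node of $T$ is touched only a constant number of times per child, so that the running time is $O(|T|)$, and then to observe that $|T| = O(|E|)$. For the latter, note that series and parallel compositions identify terminal vertices but never merge edges, so the \snss attached to the leaves of $T$ form a partition of $E(G)$; hence the edge-counts $k$ of the \lnodes $L(k)$ sum to $|E|$, there are at most $|E|$ leaves, and — since, by the remarks following the definition of a \mspt, every inner node combines at least two child components — $T$ has fewer than $|E|$ inner nodes. Thus $T$, together with its \lnode labels, has size $O(|E|)$ and the recursion issues $O(|E|)$ calls.

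Suppose first that $T$ is given. I would run two passes over $T$. A bottom-up pass computes the cardinality $|C|$ of every component $C$ (and hence $|C^\mt|$ and $|C^\mst|$, which differ from it by constants) in time $O(1+\#\text{children of }C)$ per node, hence $O(|E|)$ overall; these numbers are all the algorithm needs to pick a biggest child component at each \snode and \pnode. A second, recursive pass builds the arrangement, maintaining it not as an explicit position array but in a sequence data structure that supports concatenation and (lazy) reversal in $O(1)$ time and keeps, for each component, pointers to the current positions of its source and sink. With this representation, all the operations prescribed in Subsections~\ref{spg:subsec:alg:sns}--\ref{spg:subsec:alg:sc} — concatenating child sequences, reversing the ``flipped'' children of a \snode, and deleting the constantly many terminal vertices that a child ``lends'' to its right neighbour or to its parent — cost $O(1)$ each. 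Hence the work per node of $T$ is $O(1+\#\text{children})$, so $O(|E|)$ in total, and a final traversal of the root sequence that resolves the accumulated lazy reversals writes out the positions $1,\dots,|V|$ in $O(|V|)=O(|E|)$ time. This gives the $O(|E|)$ bound.

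If $T$ is not given, I would first compute a \mspt of $G$ using the procedure of Appendix~\ref{sec:appendix:scs}, which extends the \spt algorithm of Bodlaender and de Fluiter~\cite{bib:bodlaender1996} and runs in time $O(|E|\log|E|)$, and then apply the linear-time procedure above; the total is $O(|E|\log|E| + |E|) = O(|E|\log|E|)$.

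The step I expect to require the most care is ensuring that every sequence operation runs in (amortized) constant time. A naive variant that materializes each component's arrangement as an explicit array and copies its children into it would spend $\Theta\bigl(\sum_{C}|C|\bigr)$ time, which can be $\Theta(|V|^2)$ because a single vertex may lie in a linear number of nested components; the lazily reversed sequence representation together with the maintained source/sink pointers is exactly what avoids this. The only other point to verify carefully is the bookkeeping of shared terminals — each terminal of a component is simultaneously a terminal of some of its children and of its parent, and the ``lending'' rule of Subsection~\ref{spg:subsec:alg:sc} determines which occurrence is kept — but since this touches only constant-size pieces of each sequence it does not affect the asymptotics.
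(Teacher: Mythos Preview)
Your proposal is correct and follows essentially the same approach as the paper: bound the size of the \mspt by $O(|E|)$, traverse it bottom-up, and argue that the work at each node is linear in its number of children. You are in fact more explicit than the paper about the sequence data structure (with $O(1)$ concatenation and lazy reversal) needed to avoid the $\Theta(|V|^2)$ blow-up from materializing arrangements; the paper simply asserts that ``ordering the child components'' can be done in time linear in their number without spelling out the representation.
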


\bibliographystyle{plain}
\bibliography{literature}

\pagebreak

  \begin{appendix}

\section{Computing a minimal SP-Tree}\label{sec:appendix:scs}
In order to compute a \mspt for a given \spg $G$, the algorithm by Bodlaender and de Fluiter \cite{bib:bodlaender1996} can be used.
It is a parallel algorithm that has a running time of $O(\log{|E|})$ on a concurrent random-access machine (CRCW PRAM) using $O(|E|)$ processors for an input graph $G=(V,E)$.
Thus it can be simulated in time $O(|E|\log{|E|})$ by a sequential algorithm.
Furthermore, it has a runtime of $O(\log{|E|}\log^*{|E|})$ on an exclusive read exclusive write parallel random-access machine (EREW PRAM) using $|E|$ processors.

However, their definition (which we call Def.~A) of an \spt slightly differs from ours (which we call Def.~B):
According to Def.~A, each leaf in an \spt only represents a single edge, whereas in Def.~B, leaves may represent line-graphs with $k\geq 1$ edges.
However, it is not difficult to transform an \spt according to Def.~A to an \spt according to Def.~B: 
Simply transform each subtree induced by an \snode at the second lowest level and its children to an \lnode $L(k)$, with $k$ being the number of children of the \snode in the original \spt, and all remaining leaves that are neither \pnodes nor \snodes to \lnodes $L(1)$.

Note that since each parallel or series composition has at least two input graphs, any non-leaf in a \spt has at least two children.
Since each leaf in a \spt according to Def.~A corresponds to an edge in $G$, this implies that the size of the \mspt returned by the algorithm of Bodlaender and de Fluiter is at most $2|E|$ and that its height is at most $O(\log{|E|})$.
For the sequential case, the \spt transformation can obviously be performed in time linear in the size of the \spt that it is performed on.
Thus, determining a \mspt according to Def.~B is possible in time $O(|E|\log{|E|})$ in this case.
For the parallel case (execution on an EREW PRAM with $\Theta(|E|)$ processors), it is not difficult to see that the \spt transformation can be implemented such that the runtime is linear in the height of the \spt, thus determining a \mspt according to Def.~B is possible in time $O(\log{|E|}\log^*{|E|})$ in the parallel case.

\section{Additional proofs}\label{sec:appendix:proof}
\sedexistence*
\begin{proof}
Let $G$ be a \spg.
In order to show that the \sed is well-defined, i.e., we can find \apaths and \spaths as in Definition~\ref{spg:def:sed}, we need to show that there is an \apath for any parallel component (an assumption we make when constructing the \apath through a non-``innermost'' parallel component).
We can do this via induction on the number of nodes of a parallel component.
The induction hypothesis is: For any parallel component $P$, there is an \apath from the source of $P$ to the sink of $P$.

By Definition~\ref{spg:def:sed}, the induction hypothesis holds for all ``innermost'' components.
Let $P$ be a non-``innermost'' parallel component and let the induction hypothesis hold for all parallel components of smaller size.
In particular, the induction hypothesis holds for all (parallel) child components of (series) child components of $P$.
Thus, we can construct the paths $Q_i$, $1 \leq i \leq m$ through all $m$ child components of $P$.
Since we choose $Q_1$ to be the \apath of $P$, the claim also holds for $P$ and the induction is finished.

For the second claim, observe that by Definition~\ref{spg:def:sed}, whenever we select a path $Q$ to become an \apath or an \spath of a parallel component $P$, any section of this path has either been of no type before (which holds for the sections of the series child component $S$ of $P$ that are \snss) or has been an \apath of a next smaller parallel component.
This inductively implies that each edge can belong to at most one \spath in $\SED$.
\end{proof}

\sumofvaddcostatmostthreetimessumofaddcost*
\begin{proof}
	Note that the only difference between the definition of the \addcost and the \vaddcost is the addend ($\sum_{\{x,y\}\in E_C}{length_S(x,y)}$) in the definition of the \vaddcost of series components.
	Denote by $\sigma$ the sum of all these addends for all series components in $\mathcal{C}$.
	Then, by Definition~\ref{spg:def:vaddcost}: $\sum_{C \in \mathcal{C}}{\VCOST_{\ARROPT}(C)} = \sum_{C\in\mathcal{C}}{\ACOST_{\ARROPT}(C)} + \sigma$.
	Thus the claim of the lemma is equivalent to $\sigma \leq 2\cdot \sum_{C\in\mathcal{C}}{\ACOST_{\ARROPT}(C)}$.

	By Observation~\ref{spg:obs:totalcost}, this boils down to that $\sigma$ is at most twice the total optimal cost.
	We prove the claim by showing that the length of each edge $e \in E$ is taken into account at most two times in $\sigma$.
	Since the length of each edge is contained exactly once in the total optimal cost (by the definition of the total cost), this yields the claim.
	More specifically, we show that there are at most two distinct occasions, at which the length of an edge $e$ is considered for a calculation of the second addend.

	Consider a fixed edge $e \in E$.
	We now identify the only possible cases where the length of $e$ \wrt $\ARROPT$ (possibly restricted to some component) is added to $\sigma$.
	The first of the two cases in which this is possible is that $e$ is contained in a \sns that is a child component of a series component $S$.
	This can occur only once.
	The second case is that $e$ is contained in an \spath of a parallel component $P$ and taken into account for the calculation of the \vaddcost of $S$, the series component whose child is $P$.
	As follows from Lemma~\ref{spg:lem:sed_existence}, each edge can occur in one \spath only.
	Thus, also this case can occur only once.
\end{proof}

\optsns*
\begin{proof}
	Let $L$ be a \sns in a \spg $G$ and let $\ARROPT$ be an optimal linear arrangement of $G$.
	In the following, we consider the arrangement $\ARROPT$ restricted to $L$.
	\Wlog let $u$ be the node that defines $\Delta_L$ in $\ROPT{L}$, i.e., the terminal node with a minimum number of nodes to its right or left in $\ROPT{L}$.
	Furthermore, \Wlogs assume that $u$ is on the left half of $L$ in $\ROPT{L}$.	
	Denote the leftmost node in $\ROPT{L}$ by $l$ and the rightmost node in $\ROPT{L}$ by $r$.
	An illustration of the given situation can be found in Figure~\ref{spg:fig:lb_sns}\subref{spg:fig:lb_sns_a}.
	There is a simple path $Q$ from $u$ to $l$ in $L$.
	This path may or may not contain $r$.

	\begin{figure}[htb]\centering
	      \subfloat[width=170pt][Schematic representation of $\ROPT{L}$ (edges are not shown).]{
	      \label{spg:fig:lb_sns_a}
	      \centering		      
		      \includegraphics[width=170pt]{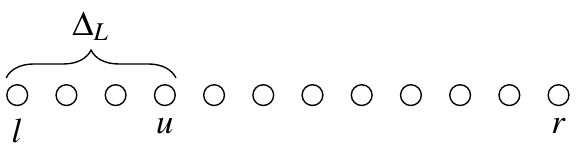}
	      }
	      
	      \subfloat[width=170pt][Example of the original \sns $L$ in $G$ as in the first case.]{
		  \label{spg:fig:lb_sns_b}
		    \centering		  
		      \includegraphics[width=170pt]{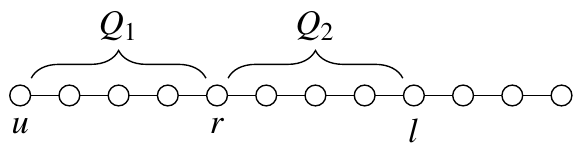}
	      }
	      \hfill
	      \subfloat[width=170pt][Example of the original \sns $L$ in $G$ as in the second case.]{
		    \label{spg:fig:lb_sns_c}
		    \centering
		      \includegraphics[width=170pt]{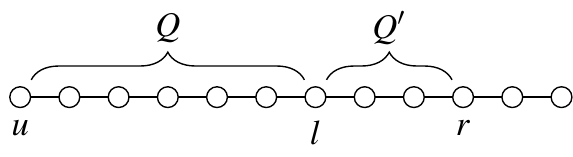}
	      }
	      \caption{Illustration of the notions of Lemma~\ref{spg:lem:optsns}.}
	      \label{spg:fig:lb_sns}
	\end{figure}

	We first consider the case that $Q$ contains $r$.
	In this case, split $Q$ into two paths $Q_1$ and $Q_2$, the first ranging from $u$ to $r$, the second ranging from $r$ to $l$.
	This is illustrated in Figure~\ref{spg:fig:lb_sns}\subref{spg:fig:lb_sns_b}.
	The first path has a length of at least $|L|/2 \geq \Delta_L$ in $\ROPT{L}$, whereas the second path has a length of at least $|L|-1$ in $\ROPT{L}$.
	Since any simple node sequence consists of a single simple path only, no edge is counted twice in this calculation.
	All in all, this completes the proof for this case.

	If $Q$ does not contain $r$, then there is also a second path $Q'$ from $l$ to $r$ whose nodes are disjoint from those in $Q$ (except for $l$).
 	This is illustrated in Figure~\ref{spg:fig:lb_sns}\subref{spg:fig:lb_sns_c}.
	The length of $Q'$ in $\ROPT{L}$ is at least $|L|-1$, whereas the length of $Q$ in $\ROPT{L}$ is at least $\Delta_L$.
	Together, this yields the claim in this case, too.
\end{proof}

\optsc*
\begin{proof}
  Let $S$ be a series component with $m \geq 2$ child components $P_1, \dots, P_m$ in a \spg $G$ and let $\ARROPT$ be an optimal linear arrangement of $G$.
	In the following, we consider the arrangement $\ARROPT$ restricted to $S$.
	
	The \vaddcost of $S$ is defined as the sum of two values:
	the \addcost of $S$ and a sum over certain edge lengths (the latter sum we denote by $\sigma$).
	We now take a closer look at how the \addcost arise and which edge lengths contribute to the value of $\sigma$.
			
	The additional term $\sigma$ is the sum of lengths (restricted to $S$) of all edges in the set $E_S$ that contains all edges of \snss that are child components of $S$ and all edges of \spaths of the child components of $S$ that are parallel components.
	
	The \addcost of $S$ arise in the following way:
	An edge $e$ has a greater length in the arrangement $\ARROPT$ restricted to $S$ than in the arrangement $\ARROPT$ restricted to the child component $P_i$ of $S$ that $e$ is from.
	We call the difference in the length of $e$ the \emph{additional stretching} of $e$.
	Then, the sum of additional stretchings of all edges is the \addcost of $S$.
	Thus, for the \addcost we need to determine a lower bound on the sum of additional stretchings due to $S$.		
	
	We now give a lower bound on $\sigma$ and on the sum of additional stretchings individually.
	The sum of these two bounds yields a lower bound of the \vaddcost of $S$.
	
	Let $V_S$ be the set of nodes that contains all the endpoints of edges in $E_S$ and denote the interval spanning the set $V_S$ in $\ROPT{S}$ by $I$ (see Def.~\ref{spg:def:spanninginterval}).
	Since there exists a simple path from the leftmost node in $V_S$ (\wrt $\ROPT{S}$) to the rightmost node in $V_S$ consisting of edges from $E_S$ only, $\sigma$ is at least $|I|-1$.

	In order to prove the lower bound on the sum of the additional stretchings, we consider nodes whose positions are to the left of those in $I$ (the \emph{part left of $I$}) and nodes whose positions are to the right of those in $I$ (the \emph{part right of $I$}) individually.
	We show that for any node in the part left of $I$, except for nodes in a sequence of nodes that starts with the leftmost node, at least one edge is stretched by an amount of one due to this node.
	The analog can be shown for the part right of $I$.
	This yields a lower bound on the sum of additional stretchings linear in the number of these nodes.

	Denote the leftmost node of $\ROPT{S}$ by $a$.
	In the following, assume $a \notin V_S$.
	Later on, we will consider the case $a \in V_S$ as well.
	Let $A$ be the maximal sequence of consecutive (\wrt $\ROPT{S}$) nodes that starts with $a$ and contains only nodes that belong to the same child component $P_A$ of $S$ as $a$, but not to $V_S$.
	Note that $P_A$ may be either a \sns or a parallel component.
	Similarly, assume that the rightmost node of $\ROPT{S}$ is not in $V_S$ (again, we will consider the other case later on).
	Let $B$ be the analogous sequence starting with the rightmost node and let $P_B$ be defined analogously (i.e., $B$ must not include any node from $V_S$).
	The situation is depicted in Figure~\ref{spg:fig:ana:lowerbound:sc_situation}.
	\begin{figure}[htb]\centering
		\includegraphics[width=.6\linewidth]{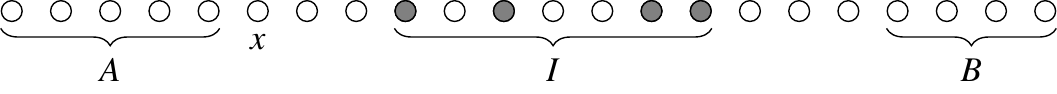}
		\caption{Example of $\ARROPT$ restricted to $S$ in the proof of Lemma~\ref{spg:lem:optsc}. 			
						Gray nodes are from $V_S$.}
		\label{spg:fig:ana:lowerbound:sc_situation}
	\end{figure}
	
	From the rightmost node in $A$, there exists a path to a closest node in $V_S$, which is stretched by all the nodes between $A$ and $I$ that do not belong to $P_A$.
	Denote the set of these nodes by $O_L$.
	Let $x$ be the right neighbor of the rightmost node in $A$.
	If $x \notin V_S$, there exists a path from $x$ to a closest node in $V_S$ (since the terminals of the component $P_x$ that $x$ belongs to must definitely be in $V_S$), which is stretched by, among others, the nodes from $P_A$ that lie between $x$ and $I$.
	Denote the set of these nodes by $O_A$.
	If, however, $x \in V_S$, we have $O_A = \emptyset$ (and the following holds accordingly).
	All in all, since we identified the stretchings of disjoint paths, we have an additional stretching of $|O_A| + |O_L|$ induced by the nodes in the part left of $I$.
	The right part (starting from $B$) is analog, so we also have an additional stretching of $|O_B| + |O_R|$, where $O_B$ and $O_R$ are defined analogous to $O_A$ and $O_L$, respectively.

	Summing up the \vaddcosts we collected so far, yields $|I|-1 + |O_A| + |O_L| + |O_B| + |O_R|$.
	Since $I \cup O_A \cup O_L \cup O_B \cup O_R$ is the set of all nodes that are in $S$ but not in $A$ or $B$, this sum is equal to $|S| - 1 - |A| - |B|$.
	It holds that $|S| = \sum_{i=1}^{m}{|P_i^\mt|} + 1$.	
	Thus, we have:
	\[\VCOST_{\ARROPT}(S) \geq \sum_{i=1}^{m}{|P_i^\mt|} - |A| - |B|. \]
	Next, we apply some upper bounds for the sizes of $A$ and $B$ in order to rewrite this inequation as in the original claim of the lemma.

	\Wlog assume that $|A| \geq |B|$.
	This yields $|B| \leq \Delta_S$.
	Furthermore, it holds that $|A| \leq |P_A^\mt| - \Delta_{P_A}$.
	To understand this, assume for contradiction $|A| > |P_A^\mt| - \Delta_{P_A}$.
	Then, less than $|P_A| - (|P_A^\mt| - \Delta_{P_A}) = \Delta_{P_A} + 1$ nodes of $P_A$ would have been placed to the right of $A$ in $\ROPT{P_A}$.
	Since the terminal nodes of $P_A$ are in $V_S$ and thus lie to the right of $A$, this would imply that less than $\Delta_{P_A} - 1$ nodes from $P_A$ lie to the right of the rightmost terminal node of $P_A$ in $\ROPT{P_A}$.
	However, this would contradict to the minimality of $\Delta_{P_A}$.

	Recall that we assumed that the leftmost and rightmost nodes are not in $V_S$.
	This is due to the fact that if the leftmost node is in $V_S$, the sequence $A$ does not exist and we cannot define $O_A$ and $O_L$ as above.
	The analogous is true for $O_B$ and $O_R$ in case that the rightmost node is in $V_S$.
	However, with a case distinction, we can show that the above stated bounds hold in any case.
	First, assume that the rightmost node is in $V_S$, but the leftmost node is not.
	In this case, $I \cup O_A \cup O_L$ contains all nodes that are in $S$ but not in $A$.
	Thus, $|I| - 1 + |O_A| + |O_L|$ is exactly $|S| - 1 - |A|$.
	For $|B| := 0$, we can also write this as $|S| - 1 - |A| - |B|$.
	Note that $|B| \leq \Delta_S$ trivially holds in this case.
	Second, assume that the leftmost node is in $V_S$.
	This means that $A$ is empty, which implies (since we assumed, without loss of generality, $|A| \geq |B|$) that $B$ is empty, too.
	In this case, $|I| - 1 = |S| - 1 = |S| - 1 - |A| - |B|$ if we set $|A| := |B| := 0$.
	Furthermore, it is true that $|A| \leq |P_A^\mt| - \Delta_{P_A}$ for an arbitrary component $P_A \in S$.
	Thus, regardless of which case we are in, we can continue:
	\begin{align*}
	  \VCOST_{\ARROPT}(S) &\geq \sum_{P_i \in S}{|P_i^\mt|} - |A| - |B|\\
						 &\geq \sum_{P_i \in S}{|P_i^\mt|} - (|P_A^\mt| - \Delta_{P_A}) - \Delta_S\\
						 &= \sum_{P_i \in S \setminus \{P_A\}}{|P_i^\mt|} + \Delta_{P_A} - \Delta_S\\
						 &\geq \frac{1}{2}\sum_{P_i \in S \setminus \{P_A\}}{|P_i^\mt|} + \sum_{P_i \in S \setminus \{P_A\}}{\Delta_{P_i}} + \Delta_{P_A} - \Delta_S\\
						 &\geq \frac{1}{2}\left(\sum_{i=1}^m{|P_i^\mt|} - \max_i{|P_i^\mt|}\right) + \sum_{i=1}^m{\Delta_{P_i}} - \Delta_S\\
	\end{align*}
	in which we use that $\frac{1}{2}{|P_i^\mt|} \geq \Delta_{P_i}$ for any $i$ (c.f. Lemma~\ref{spg:lem:deltabound}) and that $|P_A^\mt| \leq \max_i{|P_i^\mt|}$.
\end{proof}

\optpc*
\begin{proof}
 Let $P$ be a parallel component with $m \geq 2$ child components $S_1, \dots, S_m$ in a \spg $G$ and let $\ARROPT$ be an optimal linear arrangement of $G$.
	Note that for parallel components, the \vaddcost is equal to the \addcost.
	Thus, it suffices to provide a lower bound on the \addcost of $P$.
	Similar to the proof of Lemma~\ref{spg:lem:optsc}, this is done by determining a lower bound on the sum of additional stretchings of all edges in $P$.
	
	In the following, we consider the arrangement $\ARROPT$ restricted to $P$. 
	Denote the terminals of $P$ by $u$ and $v$.
	Without loss of generality, assume that $u$ is placed to the left of $v$.
	Let $a$ be the leftmost node in $\ROPT{P}$ and $b$ be the rightmost node in $\ROPT{P}$.
	For now, assume $a \neq u$ and $b \neq v$ (we will handle the other cases later on).
	Denote by $S_A$ the (series) child component of $P$ that $a$ is from and denote by $S_B$ the (series) child component of $P$ that $b$ is from.
	Let $A$ be the maximal sequence of consecutive nodes in $\ROPT{P}$ that starts with $a$ and only contains nodes that belong to $S_A$ but do not equal $u$.
	Similarly, define a rightmost sequence $B$ that is the analogon to $A$ (i.e., a maximal sequence of consecutive nodes in $\ROPT{P}$ that start with $b$ and whose nodes belong to $S_B$ only, but $v$ must not belong to $B$).
	It is possible that $S_A = S_B$, which does not pose a problem in the following proof.
	The situation is depicted in Figure~\ref{spg:fig:ana:lowerbound:pc_situation}.
	\begin{figure}[htb]\centering
			\includegraphics[width=.5\linewidth]{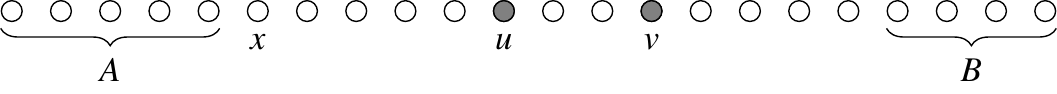}
		\caption{Example of $\ARROPT$ restricted to $P$ in the proof of Lemma~\ref{spg:lem:optpc}.}
		\label{spg:fig:ana:lowerbound:pc_situation}
	\end{figure}
	
	We now consider the nodes to the left of $v$ and to the right of $v$ individually.
	For any node whose position is left of $v$, except for $u$ and the nodes belonging to $A$, we show that there is an additional stretching of at least one due to this node.
	The analogous holds for the nodes whose positions are right of $v$, except for those belonging to $B$.
	This yields an additional stretching of at least $|P| - 2 - |A| - |B|$, which we will use to establish the lower bound in the claim of the lemma.

	Since $S_A$ is a series component with a path from $u$ to $v$, there is a path from the rightmost node of $A$ to $v$ consisting of edges from $S_A$ only.
	Let $E_L$ be the set of edges on this path.
	The additional stretching of any edge $e \in E_L$ is at least the number of nodes that lie between $A$ and $v$ and that are from different child components of $P$ (other than $S_A$), since the edges in $E_L$ span these nodes.
	Therefore, the sum of all additional stretchings of the edges in $E_L$ is at least the number of nodes in $\ROPT{P}$ that lie to the left of $v$ and do not belong to $S_A$.
	We denote this number by $|O_L|$.

	Now we consider the node $x$ that lies to the right of the rightmost node of $A$ in $\ROPT{P}$.
	By definition of $A$, either $x$ does not belong to $S_A$ or $x=a$.
	In the first case, denote the series component that $x$ belongs to by $S_x$.
	Otherwise, let $S_x$ be an arbitrary child component of $P$ such that $S_x \neq S_A$.
	There also exists a path from $x$ to $v$ consisting of edges from $S_x$ only.
	Let $E_x$ be the set of edges on this path.
	Since all the nodes of $S_A$ that lie between $x$ and $v$ (including $x$) contribute to the additional stretchings of the edges in $E_x$, the sum of additional stretchings due to the edges in $E_x$ is at least the number of these nodes, denoted by $|O_A|$.

	Considering only the section from $A$ to $v$, we have identified a sum of additional stretchings of at least $|O_L| + |O_A|$.
	Notice that only nodes that lie to the left of $v$ are counted for this argument.	
	Analogously, we can argue about the part from $B$ to $v$ and find a sum of additional stretchings of at least $|O_R| + |O_B|$, where $|O_R|$ is the number of nodes between $v$ and $B$ that do not belong to $S_B$, and $|O_B|$ is the number of nodes between $v$ and $B$ that do belong to $S_B$.
	Here, we only need to count nodes that lie to the right of $v$.
	In total, since $O_L \cup O_R \cup O_A \cup O_B$ is the set of all nodes in $P$ except for the two terminal nodes and those in $A$ or $B$, the sum of additional stretchings it at least the sum of the sizes of all (series) child components (not counting the two terminal nodes) minus the size of $A$ and the size of $B$.
	That is:
	      \[ \VCOST_{\ARROPT}(P) \geq \sum_{i=1}^{m}{|S_i^\mst|} - |A| - |B|. \]
	\Wlog, assume $|A| \geq |B|$.
	By definition of $A$ and $B$ (which must contain neither $u$ nor $v$), this yields $\Delta_P \geq |B|$.
	Furthermore, $|A| \leq |S_A^\mst| - \Delta_{S_A}$.
	To understand this, assume for contradiction $|A| > |S_A^\mst| - \Delta_{S_A}$.
	Then, less than $|S_A| - (|S_A^\mst| - \Delta_{S_A}) = \Delta_{S_A} + 2$ nodes of $S_A$ would have been placed to the right of $A$ in $\ROPT{S_A}$.
	Since $u$ and $v$ lie to the right of $A$ and belong to $S_A$, this would imply that less than $\Delta_{S_A}$ nodes from $S_A$ lie to the right of the rightmost node among $u$ and $v$ in $\ROPT{S_A}$.
	However, this would contradict to the minimality of $\Delta_{S_A}$.

	Recall that we assumed $a \neq u$ and $b \neq v$.
	We now handle the remaining cases.
	First of all, consider $a \neq u$ and $b = v$.
	In this case, the sequence $B$ does not exist and we can define $O_B := O_R := \emptyset$.
	Then, $O_L \cup O_A$ is the set of all nodes in $P$ except for the two terminal nodes and those in $A$.
	Thus, if we define $|B| := 0$, $\VCOST_{\ARROPT}(P) \geq \sum_{i=1}^{m}{|S_i^\mst|} - |A| - |B|$ holds, too. 
	Trivially, also $|B| \leq \Delta_P$ holds.	
		
	In the case $a = u$, the sequence $A$ is empty.
	Since we assumed, without loss of generality, $|A| \geq |B|$, this case can only hold if $b = v$, too.
	Let $S_A$ and $S_x$ be two arbitrary distinct child components of $P$.
	There exists a path from $u$ to $v$ through each of the two components.
	Denote the edges on this path by $E_L$ and $E_x$.
	Similar to the proof in the case $a \neq u$, the edges in $E_L$ are stretched by all nodes from $P$, except for $u$, $v$, and those in $S_A$, and the edges in $E_x$ are stretched by all nodes from $S_A$.
	All in all, we do not need to count any node twice to find a stretching of $\VCOST_{\ARROPT}(P) \geq \sum_{i=1}^{m}{|S_i^\mst|}$ in this case, too.
	To proceed as in the other cases, we define $|A| := |B| := 0$, such that $\VCOST_{\ARROPT}(P) \geq \sum_{i=1}^{m}{|S_i^\mst|} - |A| - |B|$ and $|B| \leq \Delta_P$ holds.
	
	All in all, we get:
	\begin{align*}
	  \VCOST_{\ARROPT}(P) &\geq \sum_{S_i \in P}{|S_i^\mst|} - |A| - |B| \\
						 &\geq \sum_{S_i \in P}{|S_i^\mst|} - |S_A^\mst| + \Delta_{S_A} - \Delta_P\\
						 &= \sum_{S_i \in P \setminus\{S_A\}}{|S_i^\mst|} + \Delta_{S_A} - \Delta_P\\
						 &\geq \frac{1}{2}\sum_{S_i \in P \setminus\{S_A\}}{|S_i^\mst|} + \sum_{S_i \in P \setminus\{S_A\}}{\Delta_{S_i}} + \Delta_{S_A} - \Delta_P\\
						 &\geq \frac{1}{2}\left(\sum_{i=1}^m{|S_i^\mst|} - \max_i{|S_i^\mst|}\right) + \sum_{i=1}^m{\Delta_{S_i}} - \Delta_P
	\end{align*}
	In the fourth step we use $\frac{1}{2}|S_i^\mst| \geq \Delta_{S_i}$, which is due to Lemma~\ref{spg:lem:deltabound}.
	The fifth step holds due to $|S_A| \leq \max_i{|S_i|}$.
\end{proof}

\opttotal*
\begin{proof}
	By Observation~\ref{spg:obs:totalcost}, $COST_{\ARROPT}(G)$ is the sum of all \addcosts of any series or parallel component $C$, i.e., $COST_{\ARROPT}(G) = \sum_{C\in\mathcal{C}}{\ACOST_{\ARROPT}(C)}$, where $\mathcal{C}$ is the set of all series or parallel components of $G$.
	Recall the relationship $3\cdot \sum_{C\in\mathcal{C}}{\ACOST_{\ARROPT}(C)} \geq \sum_{C \in \mathcal{C}}{\VCOST_{\ARROPT}(C)}$ from Lemma~\ref{spg:lem:sumofvaddcostatmostthreetimessumofaddcost}.
	This gives us:
	\begin{align*}
	  &6\cdot COST_{\ARROPT}(G) \geq 2\cdot\left( \sum_{C \in \mathcal{C}}{\VCOST_{\ARROPT}(C)} \right) \\
			&= 2\cdot\left( \sum_{L \in L_G}{\VCOST_{\ARROPT}(L)} + \sum_{P \in P_G}{\VCOST_{\ARROPT}(P)} + \sum_{S \in S_G}{\VCOST_{\ARROPT}(S)} \right)\\
			&\geq 2\cdot\left( \sum_{L \in L_G}{\left(|L|-1 + \Delta_L\right)} \right.\\
				&\quad\qquad + \left.\sum_{P \in P_G}{\bigl(\frac{1}{2}\bigl(\sum_{S_i \in P}{|S_i^\mst|} - \max_{S_i \in P}{|S_i^\mst|}\bigr) + \sum_{S_i \in P}{\Delta_{S_i}} - \Delta_P\bigr)} \right.\\
				&\quad\qquad + \left.\sum_{S \in S_G}{\bigl(\frac{1}{2}\bigl(\sum_{P_i \in S}{|P_i^\mt|} - \max_{P_i \in S}{|P_i^\mt|}\bigr) + \sum_{P_i \in S}{\Delta_{P_i}} - \Delta_S\bigr)}\right)
	\end{align*}
	in which we apply Lemma~\ref{spg:lem:optsns}, Lemma~\ref{spg:lem:optsc}, and Lemma~\ref{spg:lem:optpc}.
	We now take a closer look at the relationship between the different sets.
	Denote by $O$ the outmost (series or parallel) component.
	For each \sns $L \in L_G \setminus\{O\}$, there is a series or parallel component which has $L$ as a child.
	Furthermore, for each series component $S \in S_G\setminus\{O\}$, there is a parallel component which has $S$ as a child component.
	Similarly, for each parallel component $P \in P_G\setminus\{O\}$, there is a series component which has $P$ as a child component.	
	Since each (parallel or series) component adds a value of $\Delta_{C_i}$ for each of its child components $C_i$ to the cost, the subtracted $\Delta_{C_i}$ values for all components $C_i$ (except for $O$) are canceled out by these.
	All that remains is a subtrahend of $\Delta_O$ for the outmost component.
	Note that $\Delta_O \leq |V|$ and $COST_{\ARROPT}(G) \geq |V|$, which gives us $COST_{\ARROPT}(G) \geq \Delta_O$.
	Thus, the seventh addition of $COST_{\ARROPT}(G)$ in the original claim cancels out the $\Delta_O$ to be subtracted, which concludes the proof.
\end{proof}

\algtotal*
In order to prove Corollary~\ref{spg:cor:algtotal}, we first bound the \addcost of \snss, parallel components and series components individually.
This is given by the following three lemmata.

We start with a bound on the \addcost of \snss:
\begin{restatable}{lemma}{algsns}\label{spg:lem:algsns}
 For any \sns $L$ in an arrangement $\ARRALG$ computed by the \spgalg, it holds:
	\[ \ACOST_{\ARR_{ALG}}(L) \leq 2 \cdot (|L|-1). \]
\end{restatable}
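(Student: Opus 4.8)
```latex
\begin{proofsketch}
The plan is to analyze the recursive arrangement of a \sns $L$ directly from its definition in Subsection~\ref{spg:subsec:alg:sns}. Recall that if $L$ has $k$ edges, it has $k+1$ nodes labelled $1,\dots,k+1$ from source to sink along the path, and the \spgalg places them in the order $1, k+1, 2, k, 3, k-1, \dots$. Since a \sns has no child components, its \addcost equals its restricted cost $\RCOST_{\ARRALG}(L)$, i.e., simply the sum of the lengths of the $k$ path edges in this arrangement. So the whole task reduces to bounding $\sum_{i} length_{\ARRALG}(i,i+1)$.

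First I would compute the positions explicitly: under the arrangement, node $j$ (for $j$ in the ``small'' half) lands at an odd position roughly $2j-1$, while node $k+2-j$ lands at an even position roughly $2j$, so consecutive path nodes $i$ and $i+1$ (which are placed roughly $k$ apart in the interleaving) end up at positions differing by about $k$. More carefully, I would observe that for each edge $\{i,i+1\}$ of the path, one endpoint is placed from the ``left-running'' sequence $1,2,3,\dots$ and the other from the ``right-running'' sequence $k+1,k,k-1,\dots$, and their positions in $\ARRALG$ differ by at most $k$ (in fact, the maximum length of any such edge is exactly $k$, attained by the middle edge, and lengths decrease as one moves toward the ends). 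Summing the $k$ edge lengths, each of which is at most $k = |L|-1$, gives an immediate bound of $k \cdot (|L|-1) = (|L|-1)^2$, which is too weak; so the point is to use the fact that the edge lengths form essentially an arithmetic-like progression.

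The key step is therefore to show $\sum_{i=1}^{k} length_{\ARRALG}(i,i+1) \le 2(|L|-1) = 2k$. I expect this to follow from a telescoping/pairing argument: pairing edge $\{i,i+1\}$ with edge $\{k+1-i, k+2-i\}$, the two lengths sum to a constant (or nearly so), and there are $k/2$ such pairs. Concretely, writing out $\ARRALG(j)$ in closed form for $j \le \lceil (k+1)/2 \rceil$ and for $j > \lceil (k+1)/2 \rceil$, one gets that almost all path edges have length $1$ or $2$ in $\ARRALG$ — indeed the arrangement $1,k+1,2,k,3,\dots$ is precisely designed so that each path edge becomes short — and only $O(1)$ edges near the center are long (the single middle edge can be as long as $2$, and the ``wrap-around'' edges have length $1$). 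A clean case analysis on the parity of $k$ finishes it: every path edge has length at most $2$, so the total is at most $2k = 2(|L|-1)$.

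The main obstacle will be the bookkeeping of the closed-form positions and the parity cases (whether $|L|$ is even or odd changes which node sits in the middle and hence the exact length of the central edge); this is purely mechanical but must be done carefully to confirm that no edge has length exceeding $2$ and that the central edge does not push the sum past $2(|L|-1)$. Once the explicit positions are written down, the bound is immediate.
\end{proofsketch}
```
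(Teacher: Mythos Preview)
Your final conclusion --- that every path edge has length at most $2$, hence the total is at most $2(|L|-1)$ --- is exactly the paper's one-line argument, and it is correct.

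However, most of the intermediate reasoning in your sketch is wrong and should be deleted, not merely ``cleaned up.'' Specifically: it is \emph{not} true that for an edge $\{i,i+1\}$ ``one endpoint is placed from the left-running sequence $1,2,3,\dots$ and the other from the right-running sequence $k+1,k,\dots$''; in fact the opposite holds for all but the single middle edge. Writing the positions explicitly (node $j$ at position $2j-1$ for $j\le\lceil k/2\rceil$ and at position $2(k+1-j)$ for $j>\lceil k/2\rceil$), every edge $\{i,i+1\}$ with both endpoints in the same half has length exactly $2$, and the one crossover edge has length $1$ --- not ``as long as $k$.'' So the detour through a weak $(|L|-1)^2$ bound, the proposed pairing/telescoping, and the worry about a long central edge are all unnecessary and based on a mis-reading of the arrangement. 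The whole proof is: each of the $|L|-1$ edges has length at most $2$.
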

\begin{proof}
 Consider the way the \spgalg arranges a simple node sequence (c.f. Subsection~\ref{spg:subsec:alg:sns}).
  Observe that the length of each edge in $\RALG{L}$ is at most two.
	Since there are exactly $|L|-1$ edges in a \sns, this implies the claim.
\end{proof}

Next we upper bound the \addcost of parallel components as arranged by the \spgalg:
\begin{restatable}{lemma}{algpc}\label{spg:lem:algpc}
 For any parallel component $P$ in a \spg $G$ with $m \geq 2$ child components $S_1, \dots, S_m$, in an arrangement $\ARRALG$ computed by the \spgalg, it holds:
	\[ \ACOST_{\ARR_{ALG}}(P) \leq 2\cdot D^2 \cdot \left(\sum_{i=1}^{m}{|S_i^\mst|} - \max_i{|S_i^\mst|}\right). \]
\end{restatable}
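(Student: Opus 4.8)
The plan is to bound $\ACOST_{\ARRALG}(P)$ edge by edge. Since the edge sets of the child components $S_1,\dots,S_m$ partition the edge set of $P$, Definition~\ref{spg:def:addcostcomponent} gives
\[
 \ACOST_{\ARRALG}(P)=\sum_{i=1}^{m}\ \sum_{e\in E_{S_i}}\bigl(length_{\RALG{P}}(e)-length_{\RALG{S_i}}(e)\bigr),
\]
and every term is non-negative, because passing from $\RALG{P}$ to $\RALG{S_i}$ only deletes nodes and hence cannot increase distances. So it suffices to bound, for each child $S_i$, the total ``additional stretching'' its edges incur in $\RALG{P}$.

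First I would record the shape of the arrangement the \spgalg builds for $P$ (Subsection~\ref{spg:subsec:alg:pc}): $s_P$ and $t_P$ sit at two fixed positions near the left end, and the interiors $S_1^\mst,S_2^\mst,\dots,S_m^\mst$ are placed as consecutive blocks in this order, with $S_m$ a biggest child. Hence every internal node of $S_i$ occupies a position inside a block shifted to the right of the terminals by exactly $\mathrm{off}_i:=\sum_{j<i}|S_j^\mst|$. The crucial monotonicity observation is that, because the index $m$ of a biggest child never lies in $\{j<i\}$,
\[
 \mathrm{off}_i\ \le\ \sum_{j=1}^{m}|S_j^\mst|\ -\ \max_{j}|S_j^\mst|\qquad\text{for every }i .
\]

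Next I would split the edges of a child $S_i$ into three types and bound the stretching of each. (i) Edges whose two endpoints are both internal to $S_i$ (equal to neither $s_P$ nor $t_P$): both endpoints stay in one contiguous block, in the same relative order, in $\RALG{S_i}$ and in $\RALG{P}$, so such an edge is not stretched at all. (ii) Edges incident to exactly one terminal of $P$, the other endpoint internal to $S_i$: the terminal keeps a fixed position near the left end while the internal endpoint is merely shifted right by the $\mathrm{off}_i$ positions of the preceding blocks, so the additional stretching of such an edge is at most $\mathrm{off}_i\le\sum_{j}|S_j^\mst|-\max_j|S_j^\mst|$. (iii) Edges with both endpoints in $\{s_P,t_P\}$ (possible only when $S_i$ is a single-edge \sns): in the arrangement the \spgalg produces for $P$ the two terminals are adjacent, so such an edge has length $1$; this is the bookkeeping point that needs the most care, since one must also reconcile it with how a higher recursion level places the two terminals of $P$ in $\RALG{P}$.

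Finally I would sum up. For each terminal, the number of edges of $P$ incident to it is at most its degree, hence at most $D$; so the stretching contributed by all edges of types~(ii) and~(iii) at $s_P$ and at $t_P$ is at most $2D\cdot\bigl(\sum_{j}|S_j^\mst|-\max_j|S_j^\mst|\bigr)$, which is at most the claimed $2D^2\cdot\bigl(\sum_{j}|S_j^\mst|-\max_j|S_j^\mst|\bigr)$; adding the zero contribution of type~(i) edges yields the lemma. I expect the main obstacle to be the careful treatment of the terminal-incident edges in steps~(ii) and~(iii), together with pinning down exactly what $\RALG{P}$ looks like around $s_P$ and $t_P$; once that is settled, everything else is the routine block–offset arithmetic above.
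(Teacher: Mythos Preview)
Your approach is essentially the paper's: both observe that only edges incident to a terminal can be stretched when passing from the child arrangements to the arrangement of $P$, and both bound each such stretching by $\sum_j|S_j^\mst|-\max_j|S_j^\mst|$ using the fact that the biggest child $S_m$ is placed last. The one difference is in the edge count: the paper bounds the number of terminal-incident edges by $2D$ \emph{per child} and then multiplies by $m\le D$ to obtain the factor $2D^2$, whereas your global count of at most $2D$ terminal-incident edges in all of $P$ (each terminal has degree at most $D$) actually yields the sharper bound $2D\bigl(\sum_j|S_j^\mst|-\max_j|S_j^\mst|\bigr)$, from which the stated $2D^2$ bound follows a fortiori.
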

\begin{proof}
	Let $P$ be a parallel component in a \spg $G$ with $m \geq 2$ child components $S_1, \dots, S_m$, in an arrangement $\ARRALG$ computed by the \spgalg.

 Recall that for parallel components, the \addcost arise due to the increase of distances between connected nodes in the current recursion level in comparison to the previous recursion level.
	From the perspective of a single child component $S_i$, all that changes in the current recursion step is that the two terminal nodes of $S_i$ are moved by a certain distance to the left (c.f. the algorithm description in Subsection~\ref{spg:subsec:alg:pc}).
	Therefore, only the lengths of edges whose one endpoint is a terminal node can increase at all (since the remaining edges keep their previous lengths).
	For the whole component, this can be at most $2D$ edges (since each of the terminals is connected with at most $D$ other nodes from $S_i$).

	The extent to which these edge lengths increase depends on the position of $S_i$ in $\ARRALG$:
	For the leftmost component (\Wlogs, let this be $S_1$), this is zero.
	For the second component (\Wlogs, let this be $S_2$, accordingly), it is exactly $|S_1^\mst|$.
	The third component (\Wlogs, let this be $S_3$) has an increase of $|S_1^\mst| + |S_2^\mst|$ and so on.
	For the last component (which, according to the description of the \spgalg, is the biggest component and denoted by $S_m$), the extent to which the up to $2D$ edges to the terminal nodes are stretched is $\sum_{i=1}^{m-1}{|S_i^\mst|} = \sum_{i=1}^m{|S_i^\mst|} - \max_i{|S_i^\mst|}$.

	All in all, we have that for any of the $m \leq D$ child components, at most $2D$ edges are stretched by an extent of at most $\sum_{i=1}^m{|S_i^\mst|} - \max_i{|S_i^\mst|}$, which completes the proof.	
\end{proof}

The last missing piece is the upper bound of the \addcost of series components in arrangements of the \spgalg:
\begin{restatable}{lemma}{algsc}\label{spg:lem:algsc}
 For any series component $S$ in a \spg $G$ with $m \geq 2$ child components $P_1, \dots, P_m$, in an arrangement $\ARRALG$ computed by the \spgalg, it holds:
	\[ \ACOST_{\ARR_{ALG}}(S) \leq 2\cdot D \cdot \left(\sum_{i=1}^m{|P_i^\mt|} - \max_i{|P_i^\mt|}\right). \]
\end{restatable}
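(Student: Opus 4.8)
The plan is to follow the template of the proof of Lemma~\ref{spg:lem:algpc}. Since the edge sets $E_{P_1},\dots,E_{P_m}$ partition $E_S$, Definition~\ref{spg:def:addcostcomponent} yields
\[
  \ACOST_{\ARRALG}(S) = \sum_{i=1}^{m}\sum_{\{x,y\}\in E_{P_i}}\bigl(length_{\RALG{S}}(x,y) - length_{\RALG{P_i}}(x,y)\bigr),
\]
and for $x,y\in P_i$ the summand equals exactly the number of nodes of $S\setminus P_i$ lying between $x$ and $y$ in $\ARRALG$, since passing from the restriction to $S$ to the restriction to $P_i$ deletes precisely these nodes. In particular every summand is non-negative. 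So it suffices to bound, for each child $P_i$, the total number of ``foreign'' nodes separating the endpoints of its edges.

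Second, I would read off from the description of the \spgalg for series components (Subsection~\ref{spg:subsec:alg:sc}) the shape of $\ARRALG$ restricted to $S$: the nodes $u=s_S,v=t_S$ take the first two positions, followed by the $m$ child interiors laid side by side in the order $P_1,\dots,P_{a-1},P_m,\dots,P_{a+1},P_a$, with those of $P_m,\dots,P_{a+1}$ reversed. Each child interior becomes a contiguous block (reversing it preserves distances), and the terminal shared by two layout-consecutive children sits directly between their blocks, so inside such a block no foreign node can separate two nodes of $P_i$. Hence a positive contribution can arise only from edges incident to a terminal of $P_i$ that is pulled away from $P_i$'s block, and — inspecting the construction — for $1<a<m$ exactly three terminals are pulled away: $s_{P_1}=u$ (at position $1$, with $v$ between it and the block $P_1^\mst$), $t_{P_m}=v$ (whose block lies among $P_1,\dots,P_{a-1}$ while $v$ is kept at position $2$), and $s_{P_a}=t_{P_{a-1}}$ (kept at the left end of $P_a$'s block, far from the block of $P_{a-1}$). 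Each such terminal has at most $D$ edges incident to it inside the corresponding child.

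Third, I would bound how far these edge bundles reach. Writing $M:=\sum_{j=1}^m|P_j^\mt|-\max_j|P_j^\mt|=\sum_{j\neq a}|P_j^\mt|$ (recall $P_a$ is a biggest child), the point of placing $P_a$ rightmost — and of the reversals, which make the shared terminals of layout-consecutive children coincide — is precisely that neither the ``$v$-bundle'' nor the ``$t_{P_{a-1}}$-bundle'' ever reaches across $P_a^\mst$: the foreign nodes spanned by the $v$-bundle all lie in $P_1\cup\dots\cup P_{a-1}$ and number $\sum_{j<a}|P_j^\mt|-1\le M-1$, those spanned by the $t_{P_{a-1}}$-bundle all lie in $P_m\cup\dots\cup P_{a+1}$ and number $\sum_{j>a}|P_j^\mt|\le M$, and the $s_{P_1}$-bundle spans only the single node $v$. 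Summing the at most $D$ edges of each bundle gives $\ACOST_{\ARRALG}(S)\le D(M-1)+DM+D=2DM$, which is the claim. Finally I would verify the two boundary cases $a=1$ (no $P_{a-1}$, but now $s_{P_1}$ is the far-displaced terminal while $v$ stays next to its block) and $a=m$ (no reversals, $v$ far-displaced, $s_{P_1}$ next to its block); both work out the same way, using $M\ge 1$ (valid since $m\ge 2$ and $|P_j^\mt|\ge 1$ for every child).

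The step I expect to be the main obstacle is making the second and third steps fully rigorous: one must pin down, through the entire recursion, the relative positions of these terminals inside $\ARRALG$ restricted to $S$, including the effect of higher recursion levels, which — as noted in Section~\ref{spg:sec:algorithm} — may re-arrange and pull apart $s_S$ and $t_S$. The helpful fact is that such re-arrangements only insert nodes that are \emph{not} in $S$, hence do not contribute to $\ACOST_{\ARRALG}(S)$; but checking that they cannot move a displaced terminal to a position from which its edge bundle would straddle $P_a^\mst$ is the delicate part. This is the exact counterpart of the ``the extent of the stretching depends on the position of the child'' step in the proof of Lemma~\ref{spg:lem:algpc}, but messier, since a series component is liable to be stretched at both of its ends rather than only at the front.
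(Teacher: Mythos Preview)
Your decomposition of $\ACOST_{\ARRALG}(S)$ as a sum of foreign-node counts is correct, and in the arrangement the algorithm produces at the moment it processes $S$ you do find only three displaced terminals. But this is \emph{not} how the paper argues, and the difference bites precisely where you flag the main obstacle.

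The paper does not count foreign nodes in any restriction of the final arrangement. It compares the arrangement just produced at level $S$ directly with the arrangements produced at level $P_i$: for \emph{every} shared terminal --- the source of $P_1$, the sinks of $P_1,\dots,P_{a-1}$, of $P_m,\dots,P_{a+1}$, and of $P_a$ --- it reads off (from Figure~\ref{spg:fig:alg:sc}) how far that node moves between the two levels, multiplies by $D$, and sums. This picks up the internal re-ordering of each $P_i$ (the sink of $P_i$ sliding from position $2$ to the far end of its block) as well as the foreign insertions, so what is really being bounded is the one-step increase in restricted cost between consecutive recursion levels. Those one-step increases telescope to $COST_{\ARRALG}(G)$ just as the $\ACOST_{\ARRALG}$'s do, and since the level-$S$ arrangement is fixed once the algorithm finishes with $S$, higher recursion levels simply never enter.

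Your route, by contrast, needs the shape of $\RALG{S}$, and here your worry is not merely delicate but fatal. It is \emph{not} true that higher levels only insert nodes foreign to $S$: they can change the position of $v$ relative to $S^{\mst}$. Concretely, if $S$'s parallel parent $P'$ is a middle unflipped child of some series component $S''$, then at level $S''$ the block of $P'$ reads $s_{P'},\,P'^{\mst}$ with $t_{P'}=v$ placed only \emph{after} that block; restricting to $S$ gives the order $u,\,S^{\mst},\,v$, with $v$ at position $|S|$, to the right of $P_a$. The $v$-bundle now straddles all of $P_a^{\mst}$, and since $|P_a^\mt|=\max_i|P_i^\mt|$ does not appear in $M$, the foreign-node count can exceed $2M$ by an arbitrary amount (take $m=3$, $a=2$, $|P_1^\mt|=|P_3^\mt|=1$ and $|P_2^\mt|$ large). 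The fix is to adopt the paper's level-by-level bookkeeping rather than to reason about $\RALG{S}$.
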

\begin{proof}
		Let $S$ be a series component in a \spg $G$ with $m \geq 2$ child components $P_1, \dots, P_m$, in an arrangement $\ARRALG$ computed by the \spgalg.

	Again, the \addcost arise due to nodes that have a greater distance to other nodes when compared to the previous recursion level.
	We now identify the cases in which a distance is increased.
	First of all, observe that for a fixed child component $P_i$, any non-terminal node always keeps its distance to any other non-terminal node.
	Thus, we only need to identify the cases in which a terminal node increases its distance to other nodes and to what extent.
	Summing up these values yields the desired \addcost then.

	For the purpose of finding out the values, it is helpful to consult Figure~\ref{spg:fig:alg:sc}.
	Here, the dotted nodes are those that change their position (compared to the previous recursion level) and the lengths of the arrows (i.e., the number of nodes that they span) indicate the distance by which they are moved.
	We now just determine the lengths of the arrows for each of the dotted nodes one after another.
	Since each such node can be connected with at most $D$ other nodes from its component, multiplying the length with $D$ yields the \addcost caused by this node movement.
	We consider the case $1 < a < m$ first, with $a$ defined as in the description of the \spgalg (see Subsection~\ref{spg:subsec:alg:sc}).

	The first dotted node is the source of $P_1$ (which is also the source of $S$).
	Its distance to other nodes from $P_1$ increases by at most one.

	The second dotted node is the sink of $P_1$ (which is also the source of $P_2$).
	It is moved to the right end of $P_1$ at which it is shifted by a distance of $|P_1^\mt| - 1$.
	Analogously, the next dotted node (the sink of $P_2$) increases its distance to other nodes by $|P_2^\mt| - 1$.
	The same holds for the sink of each child component $P_j$ with $3 \leq j \leq a-2$: Each one is shifted by $|P_j^\mt| - 1$.
	All in all, for the sinks of the child components from $P_2$ to $P_{a-2}$, we have a total increase in distances of up to $\sum_{i=1}^{a-2}{(|P_i^\mt|-1)}$.

	The sink of $P_{a-1}$ is shifted by a greater distance:
	It passes all other nodes of $P_{a-1}$ and all nodes of $P_j$ for $m \geq j \geq a+1$. 
	That is, in total its increase in distances is at most $|P_{a-1}^\mt| - 1 + \sum_{i=a+1}^{m}{|P_i^\mt|}$ (recall that we need to add the non-dashed nodes only).

	The next dashed node we have not considered so far is the sink of $P_m$.
	It is moved to the second position (from left to right) and thereby passes all other nodes of $P_m$ (except for the source), and all nodes of $P_j$ for $a-1 \geq j \geq 1$.
	This makes up a total distance of $|P_m^\mt| - 1 + \sum_{i=1}^{a-1}{|P_i^\mt|} - 1$ (the last minus one is due to the source of $P_1$ whose movement has already been counted).

	The following dotted nodes, starting with the sink of $P_{m-1}$ and ending with the sink of $P_{a+1}$ move by the size of their component (without the two terminals) each, which makes a total distance of $\sum_{i=a+1}^{m-1}{(|P_i^\mt|-1)}$.

	Last, the sink of $P_a$ is moved by one.

	Summing over all these values yields the following upper bound on the \addcost:
\begin{align*}
	\ACOST_{\ARR_{ALG}}(S) \leq& 1 + \left(\sum_{i=1}^{a-2}{(|P_i^\mt|-1)}\right) + \left(|P_{a-1}^\mt| - 1 + \sum_{i=a+1}^{m}{|P_i^\mt|} \right)\\
	&
	+ \left( |P_m^\mt| - 1 + \sum_{i=1}^{a-1}{|P_i^\mt|} - 1 \right) + \left( \sum_{i=a+1}^{m-1}{(|P_i^\mt|-1)} \right) + 1\\
	=& 2\cdot\left(\sum_{i=1}^m{|P_i^\mt|}-\max_i{|P_i^\mt|}\right) + 2 - m
	\leq 2\cdot\left(\sum_{i=1}^m{|P_i^\mt|}-\max_i{|P_i^\mt|}\right)
	\end{align*}
	in which we use that $m \geq 2$.
	The computed value is the sum of all distances that each terminal node has moved and thus possibly increased the distance to at most $D$ other nodes in its own component.
	Therefore, this completes the proof in this case.

For the case $a=1$, the procedure is similar to the previous case.
Actually, the increase in distance caused by the movement of the dotted nodes from the sink of $P_{m-1}$ up to the sink of $P_2$ is exactly the same.
Its value is $\sum_{i=2}^{m-1}{(|P_i^\mt|-1)}$.
Additionally, we have to take into account the movement of the sink of $P_m$ (by a distance of $|P_m^\mt|-1$) and of the source of $P_1$ (which passes the nodes of all other child components, which are $\sum_{i=2}^{m}{|P_i^\mt|} + 1$ in total).
If we assume that the source of $P_1$ is moved first, the sink of $P_1$ does not need to be moved any more.
All in all, we have a total possible increase of distances (and, accordingly, an \addcost) of at most 
\[ \ACOST_{\ARR_{ALG}}(P) \leq 2\cdot\sum_{i=2}^{m}{|P_i^\mt|} + \sum_{i=2}^{m}{(-1)} + 1 \leq 2\cdot\left(\sum_{i=1}^{m}{|P_i^\mt|} - \max_i{|P_i^\mt|}\right), \]
which, following the argument from the previous case, completes the proof in this case, too.

For the case $a=m$, the movement of all of the dotted nodes up to the sink of $P_{a-2} = P_{m-2}$ is exactly the same.
For these we have a total increase in distances of $1 + \sum_{i=1}^{m-2}{(|P_i^\mt|-1)}$.
Next, we have the sink of $P_{m-1}$ which is shifted by a distance of $|P_{m-1}^\mt| - 1$.
Last, the sink of $P_m$ is shifted by a distance of $1 + \sum_{i=1}^{m-1}{|P_i^\mt|} - 1$ (the last minus one originates from the assumption that we shift the source of $P_1$ first and thus the sink of $P_m$ does not need to pass it any more).
All in all, we have a total possible increase of distances (and, accordingly, an \addcost) of at most
\[ \ACOST_{\ARR_{ALG}}(P) \leq 2\cdot \sum_{i=1}^{m-1}{|P_i^\mt|} = 2\cdot \left( \sum_{i=1}^{m}{|P_i^\mt|} - \max_i{|P_i^\mt|} \right). \]
With the same argument as in the previous two cases, this completes the proof in this case, too.
\end{proof}

These three lemmata (together with Observation~\ref{spg:obs:totalcost}) enable us to prove Corollary~\ref{spg:cor:algtotal}:
\algtotal*
\begin{proof}
Let $\mathcal{C}$ be the set of all components of $G$.
Recall that $COST_{\ARRALG}(G) = \sum_{C\in\mathcal{C}}{\ACOST_{\ARR_{ALG}}(C)}$ (by Observation~\ref{spg:obs:totalcost}).
	Applying Lemma~\ref{spg:lem:algsns}, Lemma~\ref{spg:lem:algpc}, and Lemma~\ref{spg:lem:algsc} yields:
	\begin{align*}
	COST_{\ARRALG}(G) &= \sum_{L \in L_G}{\ACOST_{\ARR_{ALG}}(L)} + \sum_{P \in P_G}{\ACOST_{\ARR_{ALG}}(P)} + \sum_{S \in S_G}{\ACOST_{\ARR_{ALG}}(S)}\\
			&\leq \sum_{L \in L_G}{2\cdot(|L|-1)}\\
				&\quad + \sum_{P \in P_G}{2\cdot D^2\cdot\left(\sum_{S_i \in P}{|S_i^\mst|} - \max_{S_i \in P}{|S_i^\mst|}\right)}\\
				&\quad + \sum_{S \in S_G}{2\cdot D\cdot\left(\sum_{P_i \in S}{|P_i^\mt|} - \max_{P_i \in S}{|P_i^\mt|}\right)},
	\end{align*}
which completes the proof.
\end{proof}

\thmapproxratio*
\begin{proof}
 Let $G$ be a \spg, let $\ARRALG$ be the linear arrangement of $G$ computed by the \spgalg, and let $\ARROPT$ be an optimal linear arrangement of $G$.
	Corollary~\ref{spg:cor:opttotal} implies:
	\begin{equation*}
	\begin{split}
		14 \cdot D^2 \cdot COST_{\ARROPT}(G) \geq& 2D^2\left( \sum_{L \in L_G}{2(|L|-1)} + \sum_{P \in P_G}{\left(\sum_{S_i \in P}{|S_i^\mst|} - \max_{S_i \in P}{|S_i^\mst|}\right)} \right.\\
		  &\qquad \left. + \sum_{S \in S_G}{\left(\sum_{P_i \in S}{|P_i^\mt|} - \max_{P_i \in S}{|P_i^\mt|}\right)} \right).
	\end{split}
 \end{equation*}

	On the other hand, from Corollary~\ref{spg:cor:algtotal}, we can derive:
	\begin{align*}
	COST_{\ARRALG}(G) \leq& 2 \cdot D^2 \cdot \left( \sum_{L \in L_G}{(|L|-1)}\right.
			 \left. + \sum_{P \in P_G}{\left(\sum_{S_i \in P}{|S_i^\mst|} - \max_{S_i \in P}{|S_i^\mst|}\right)}\right.\\
			&\qquad\qquad\left.+ \sum_{S \in S_G}{\left(\sum_{P_i \in S}{|P_i^\mt|} - \max_{P_i \in S}{|P_i^\mt|}\right)}\right)
	\end{align*}

	Together, these two equations yield the claim and thus complete the proof of this theorem.
\end{proof}

\runtime*
\begin{proof}
For this proof, we assume that the \mspt $T$ used to define the series and parallel components of a graph $G=(V,E)$ is given as an input to the algorithm.
If $T$ is not given, the algorithm described in Appendix~\ref{sec:appendix:scs} can be used to compute it in time $O(|E|\log{|E|})$ (thus the runtime bound is slightly higher in that case).

Let $s$ be the sum of the sizes (\wrt the number of nodes in it) of all \snss and let $c$ be the total number of (series or parallel) components in $G$.
We now determine upper bounds on $s$ and $c$ and then bound the runtime depending on their values.

First of all, notice that each edge in $G$ can be part of at most one \sns.
Furthermore, any \sns has at most twice as many nodes as edges.
Thus, $s$ can be at most $2|E|$ in total.

Second, note that the number of edges in any result of a series or parallel composition operation $OP$ is strictly greater than the number of edges in each input to $OP$.
Thus, $c$ is upper bounded by $|E|$.

It is possible to implement the \spgalg in the following way:
The algorithm traverses $T$ from bottom to top and constructs an internal representation of the \spg as well as its linear arrangement during this traversal.
At each node $u$ of $T$, the arrangement of the subgraph corresponding to $u$ is determined (basically by ordering the child components of $u$ in the right way). 
This way, it is obvious that the runtime of the algorithm is upper bounded by the number of nodes of $T$ (which is $c$) plus the time spent at each edge of $T$.
Therefore, we now determine the latter.

At the lowest recursion level, the algorithm arranges the \snss.
For each \sns $S$, determining the order of the nodes in $S$ is possible in time linear in $|S|$.
Thus, for arranging all \snss, we need time linear in $s$ only.

Throughout the higher recursion levels, the algorithm determines the order of the child components of all series or parallel components that are not \snss.
For any such component $C$, this is possible in time linear in the number of child components of $C$.
Since each component can be a child component of at most one other component, this implies that the higher recursion levels require time linear in $c$ in total.

All in all, the number of nodes in $T$ is upper bounded by $O(|E|)$ and the total time spent at each node of $T$ is upper bounded by $O(s+c) = O(|E|)$.
Thus, the algorithm has a total runtime of $O(|E|)$.
\end{proof}

\section{Additional figures}\label{sec:appendix:figures}
\begin{figure}[htb]\centering
	\subfloat[width=.49\linewidth][$SNS(6)$.]{\centering
		\includegraphics[scale=1]{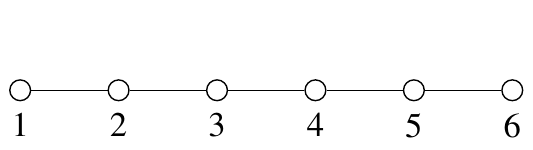}	
	}
	\hfill
	\subfloat[width=.49\linewidth][The resulting arrangement of $SNS(6)$.]{\centering
		\includegraphics[scale=1]{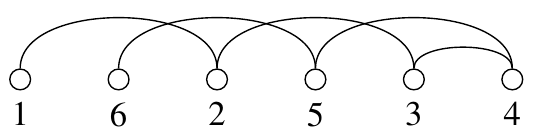}
	}

	\caption{\spgalg arrangement of a \sns.}
		\label{spg:fig:sns_example}
\end{figure}

\begin{figure}[htb]\centering
	\subfloat[width=.42\linewidth][$P$ in the original graph.]{
		\centering
		\includegraphics[scale=1.0]{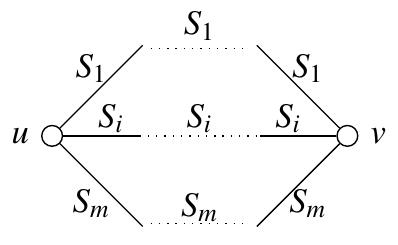}
	}
	\hspace{60pt}
	\subfloat[width=.42\linewidth][The resulting arrangement of $P$. Dashed nodes represent former positions of $u$ and $v$.]{
		\includegraphics[width=.42\linewidth]{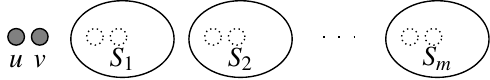}
	}
	\caption{The order in which the \spgalg arranges a parallel component $P$ with source $u$ and sink $v$ consisting of $m$ series child components $S_1, \dots, S_m$. Note that we assume $S_m$ to be a biggest child component.}
		\label{spg:fig:pc_arrangement_example}
\end{figure}

\begin{figure}[htb]\centering
	\subfloat[width=\linewidth][$1 < a < m$]{\centering
		\includegraphics[scale=1]{spgdefseriescomponentorderingexampleA}
	}
	
	\subfloat[width=\linewidth][$a = 1$]{\centering
		\includegraphics[scale=1]{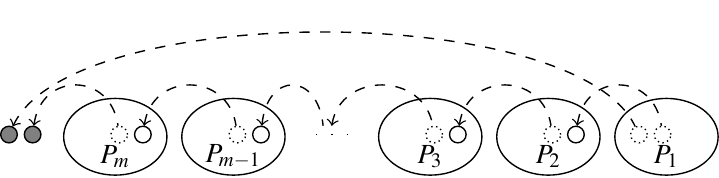}
	}
	
	\subfloat[width=\linewidth][$a = m$]{\centering
		\includegraphics[scale=1]{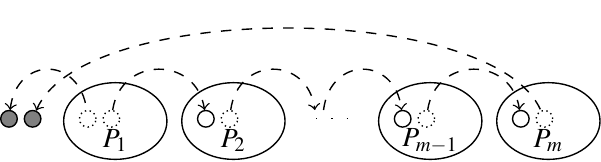}
	}
	\caption{The order in which the \spgalg arranges a series component consisting of $m$ child components. $P_a$ is a biggest child component.
	Dashed nodes indicate the position at which a node would be placed according to the previous recursion level.
	Dashed arrows indicate the change in position at the current recursion level.
	}
\end{figure}

\begin{figure}[tb]\centering
	\subfloat[width=.55\linewidth][Extract of an example graph with labels of some components.]{\centering
		\includegraphics[width=.55\linewidth]{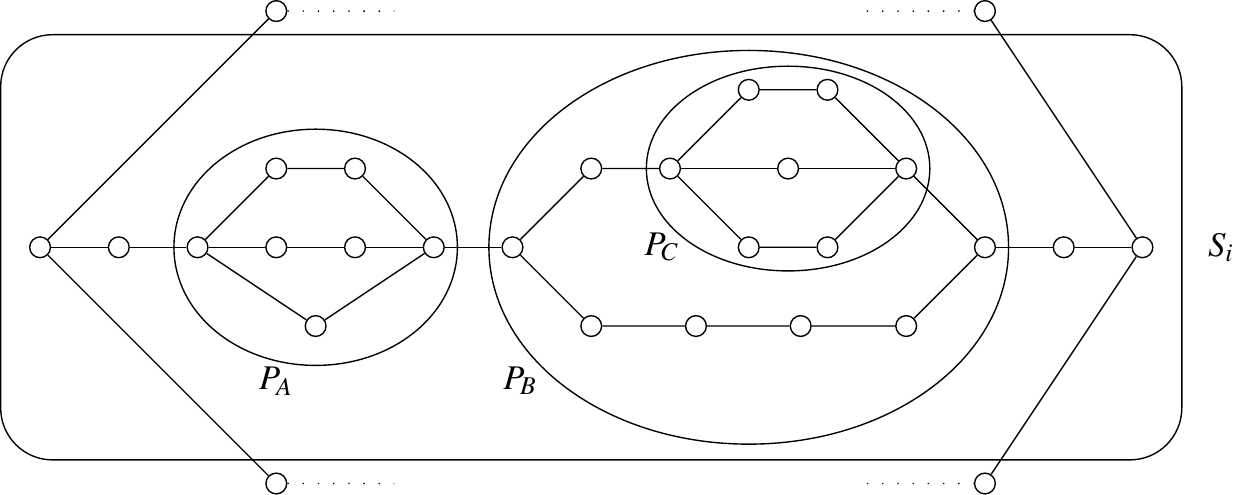}
	}
	
	\subfloat[width=.47\linewidth][Selection of one \apath and two \spaths for each of the two components $P_A$ and $P_C$.]{\centering
		\includegraphics[width=.47\linewidth]{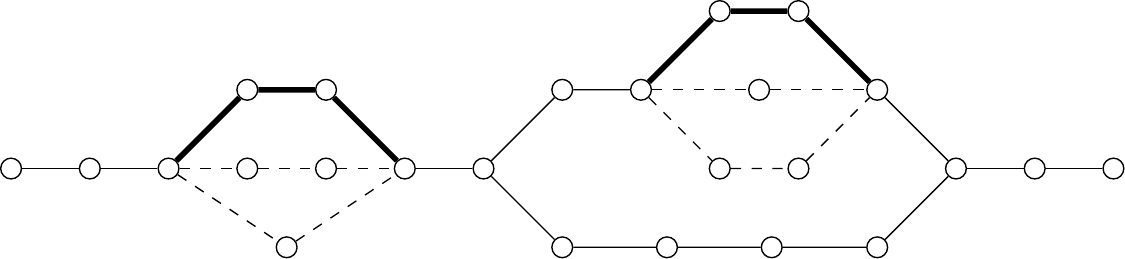}
	}
	\hfill
	\subfloat[width=.47\linewidth][Subsequent selection of an \apath and an \spath of $P_B$.]{
	  \label{spg:fig:sed_examplec}
	  \centering
		\includegraphics[width=.47\linewidth]{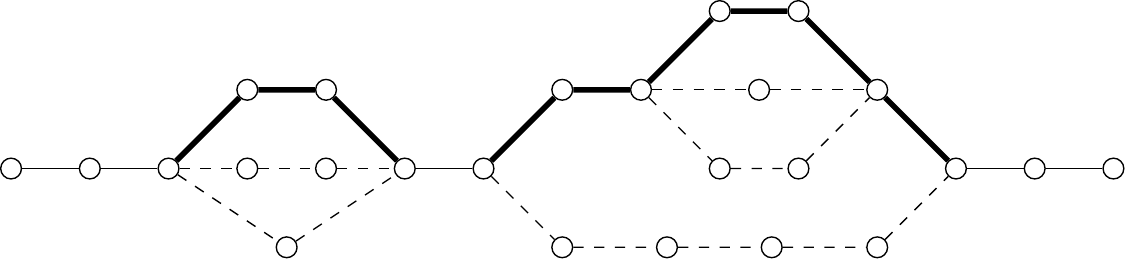}
		
	}
	\caption{Step-by-step illustration of an \sed.
					 After the first figure, only $S_i$ is shown.
					 Thick lines describe an \apath, whereas dashed lines describe \spaths.}
	\label{spg:fig:sed_example}
\end{figure}

\end{appendix}

\end{document}